\numberwithin{equation}{section}
\DeclareMathOperator{\sign}{sign}
\newcommand{\R}{\mathbb{R}}
\newcommand{\C}{\mathbb{C}}
\newcommand{\1}{\mathbb{I}}
\newcommand{\cA}{{\mathcal A}}
\newcommand{\cB}{{\mathcal B}}
\newcommand{\cE}{{\mathcal E}}
\newcommand{\cG}{{\mathcal G}}
\newcommand{\cI}{{\mathcal I}}
\newcommand{\cK}{{\mathcal K}}
\newcommand{\cL}{{\mathcal L}}
\newcommand{\cT}{{\mathcal T}}
\newcommand{\cV}{{\mathcal V}}
\newcommand{\diag}{\mathrm{diag}}
\newcommand{\ii}{\mathrm{i}}
\newcommand{\e}{\mathrm{e}}
\newtheorem{theorem}{Theorem}[section]{\bf}{\it}
\newtheorem{proposition}[theorem]{Proposition}{\bf}{\it}
\newtheorem{corollary}[theorem]{Corollary}{\bf}{\it}
{\it}{\rm}
\newtheorem{lemma}[theorem]{Lemma}{\bf}{\it}
\newtheorem{remark}[theorem]{Remark}{\it}{\rm}
{\bf}{\it}
{\bf}{\it}
{\bf}{\it}
{\bf}{\it}
\title{Grassmann--Gaussian integrals and generalized star products\\
\vspace{0.3cm}
{\Small \emph{In memory of Al.B. Zamolodichikov}}
}
\author[Sh. Khachatryan] {Shahane Khachatryan}
\address{Shahane Khachatryan\\ Yerevan Physics Institute, Br. Alikhanyan 2\\ Yerevan 36, Armenia}
\email{shah@yerphi.am}
\author[R. Schrader]{Robert Schrader}
\address{Robert Schrader\\ Institut f\"{u}r
Theoretische Physik\\ Freie Universit\"{a}t Berlin, Arnimallee 14
\\ D-14195 Berlin,
Germany}
\email{schrader@physik.fu-berlin.de}
\author[A. Sedrakyan]{ Ara Sedrakyan}
\address{Ara Sedrakyan\\ Yerevan Physics Institute, Br. Alikhanyan 2\\ Yerevan 36, Armenia}
\email{sedrak@nbi.dk}
\thanks{$^\ast$ R.S. and A.S. are supported in part by the Humboldt Foundation}
\keywords{Quantum networks, scattering theory, generalized star product, Berezin integration}
\subjclass[2000]{81U20,05C50;34L25}
\begin{document}

\begin{abstract} In quantum scattering on networks there is a non-linear composition rule for on-shell 
scattering matrices which serves as a replacement for the multiplicative rule of transfer matrices valid 
in other physical contexts. In this article, we show how this composition rule is obtained using 
Berezin integration theory with Grassmann variables.
\end{abstract}

\maketitle

\section{Introduction}\label{sec:1}

   Potential scattering for one particle Schr\"{o}dinger operators on the line possesses 
a remarkable property concerning its (on-shell) scattering matrix given as a 
2 × 2 matrix-valued function of the energy. Let the potential $V$ be given as the sum of 
two potentials $V_1$ and $V_2$ with disjoint support. Then the scattering matrix for $V$ at a 
given energy is obtained from the two scattering matrices for $V_1$ and $V_2$ at 
the same energy by a certain non-linear, noncommutative but associative composition rule. 
This fact in quantum scattering theory on the line has been discovered independently by several
authors (see, e.g., \cite{KSEI,Aktosun,RRT,SV1,SV2} and is an easy consequence of the 
multiplicative property 
of the transfer matrix of the Schr\"{o}dinger equation (see e.g. \cite{KS3}). It has also
been well known in the theory of mesoscopic systems and multichannel conductors (see, e.g.,
\cite{Tong,Dorokhov1,Dorokhov2,Dorokhov3, Dorokhov4,MPK,SMMP,Beennakker,Datta, ET}). In higher space 
dimensions a similar rule is unlikely to exist due to the defocusing of wave packets under propagation.
However, for large separation between the supports of the potentials the scattering matrix at a
given energy may asymptotically be expressed in terms of the scattering matrices for $V_1$ and $V_2$ at 
the same energy \cite{KS2,KS3}.

   To the best of our knowledge the composition rule for $2\times 2$ scattering matrices was first
observed in the context of electric network theory by Redheffer \cite{Redheffer1,Redheffer2}, 
who called it the \emph{star product}. Now there are situations, where 
the concept of a transfer matrix cannot always be introduced but where the (on shell) scattering matrix 
nevertheless exists. An important example is given by quantum dynamical models on graphs, that is quasi-one 
dimensional quantum systems, and which are described by Schr\"{o}dinger operators. 
Such systems are nowadays a subject of intensive study (see, e.g., \cite{Kuchment,BCFK, EKKST} and 
references quoted there). In the article \cite{KS4} a composition rule, called the 
\emph{generalized star product}, was introduced and further analyzed in \cite{KS5}. This composition 
rule extends the star product of Redheffer and allows one to obtain the on-shell scattering matrix on a 
given graph from the on-shell scattering matrices associated with sub-graphs. 
The generalized star product is 
defined for arbitrary matrices but for unitary matrices the outcome is also unitary.

In this article we provide a new way of obtaining the generalized star product. The method is based on 
the Grassmannian (fermionic) integration theory given by Berezin, \cite{Berezin} and it evaluates certain 
Gauss -- Grassmann integrals. 
In addition we also show, how one can arrive at the generalized star product using ordinary 
Gaussian (bosonic) distributions. Then, however, one has to work with a restriction, the covariances have 
to be positive definite matrices.

This technique of using Gaussian integration with fermionic fields permits an action formulation of 
some network models. Thus for example it can be applied to the Chalker -- Coddington network model 
\cite{CC} to describe plateau transitions in the quantum Hall effect \cite{Sedrak}. The method is also very 
convenient for an investigation of models with a large number of scattering centers. In this limit, as well 
as at the critical point, one can give an equivalent quantum field theoretic formulation of network models 
\cite{AS2}.  

The article is organized as follows. In the next section in order to establish notation and for the 
convenience of the reader we briefly recall the basic notions of Berezin's theory. Though most of the 
material can be found in standard text books of quantum field theory, see, e.g., \cite{IZ, Weinberg}, 
the relations we need seem not to be so easily accessible.
In Section \ref{sec:3} we show how to obtain the generalized star product using Gauss -- Grassmann 
integrals.
In the Appendix we briefly discuss the corresponding bosonic version, that is how the generalized 
star product can be obtained from the standard theory of Gaussian distributions.

\section{Preliminaries}\label{sec:2}

In this section we first briefly review Grassmann integration and then we recall some concepts from 
graph theory that we will need.

\vspace{0.3cm}
\subsection{ Grassmann Integration }

In this subsection we briefly recall the basic notions concerning Grassmann variables and the 
associated integration theory, see \cite{Berezin}, and which we will need.
Let $\overline{a}_i,a_i$ be Grassmann variables, which means they anti-commute  
$$
\overline{a}_i \overline{a}_j=-\overline{a}_j\overline{a}_i,\quad 
\overline{a}_i a_j=-a_j\overline{a}_i,\quad a_i a_j=-a_j a_i.
$$
We denote the associated (complex) Grassmann algebra with unit $\1$ by $\cA_I$. 
Elements $\alpha$ in $\cA_I$ have a unique representation in the form 
\begin{equation}\label{gr1}
\alpha=\alpha(\overline{a},a)=\sum_{J,K\subseteq I}c_{J,K}\,\overline{a}_J\, a_K,\quad c_{J,K}\in\C 
\end{equation}
with the anti-ordered and ordered products
$$
\overline{a}_J=\buildrel\curvearrowleft\over\prod_{j\in J}\overline{a}_j,
\quad a_K=\buildrel\curvearrowright\over\prod_{j\in K}a_j,
\qquad J\neq \emptyset,K\neq \emptyset
$$
and $\overline{a}_\emptyset=a_\emptyset=\1$. By definition 
\begin{equation}\label{gr2}
\alpha(\overline{a}=0,a=0)=c_{J=\emptyset,K=\emptyset}.
\end{equation}

Correspondingly the subalgebra generated by the elements $\overline{a}_j,a_j$ with 
$i\in J$ will be denoted by $\cA_J$.
In addition we introduce symbols 
$d\overline{a}_i$ and $da_i$ which anti-commute among themselves and with the $\overline{a}_i,a_i$
and define the anti-ordered and ordered products 
$$
d\overline{a}_J=\buildrel\curvearrowleft\over\prod_{j\in J}d\overline{a}_j,
\qquad  da_K=\buildrel\curvearrowright\over\prod_{j\in K}da_j.
$$
(Berezin-)Integration over $L\subseteq I $ is defined as 
\begin{equation}\label{gr3}
\int\overline{a}_J \,a_K \,d\overline{a}_{L}\,da_{L}=
\begin{cases}\quad0&\quad L\not\subseteq J\quad\mbox{or}\quad L\not\subseteq K\\
&\\
\quad(-1)^{\mid L\mid}\sign^L_{JK}\: \overline{a}_{J\setminus L}\,a_{K\setminus L}&
\quad L\subseteq J\quad\mbox{and}\quad L\subseteq K  
\end{cases}
\end{equation}
where $\sign^L_{JK}=\pm 1$ is defined when $L\subseteq J$ and $L\subseteq K$ and is such that 
$$
\overline{a}_J \,a_K=\sign^L_{JK}\:\overline{a}_{J\setminus L}\,a_{K\setminus L}\,\overline{a}_L\,a_L
$$ 
holds. We call $d\overline{a}_{L}\,da_{L}$ the \emph{volume form} associated to the index set $L$. 
From the definition \eqref{gr3} it is easy to see that integration may be performed in 
steps (Fubini's Theorem), an observation which will become important in our discussion. 
As a special case of \eqref{gr3} and with $\alpha\in\cA_I$ written in the form \eqref{gr1}
\begin{equation}\label{gg3}
\int\,\alpha\,d\overline{a}_{I}\, da_{I}=(-1)^{\mid I\mid}c_{I,I}.
\end{equation}
In analogy to the Lebesgue integral over $\R^n$ this integral exhibits a \emph{translation invariance}
in the following form. Introduce additional Grassmann variables 
$\overline{b}_i,b_i$, again with the index $i$ in $I$ and which anti-commute with all the previous 
Grassmann variables. 
With $\alpha(\overline{a},a)$ as in \eqref{gr1}, by 
$\alpha(\overline{a}-\overline{b},a-b)$ we understand the expression   
\begin{equation}\label{gr4}
\alpha(\overline{a}-\overline{b},a-b)=\sum_{J,K\subseteq I}c_{J,K}\,(\overline{a}-\overline{b})_J\, (a-b)_K 
\end{equation}
with the anti-ordered and ordered products
$$
(\overline{a}-\overline{b})_J=\buildrel\curvearrowleft\over\prod_{j\in J}(\overline{a}_j-\overline{b}_j),
\quad (a-b)_K=\buildrel\curvearrowright\over\prod_{j\in K}(a-b)_j.
$$
Expanding each $(\overline{a}-\overline{b})_J\, (a-b)_K $ into a sum of monomials 
$\overline{a}_{J^\prime}\,a_{K^\prime}$ it is easy to establish translation invariance 
of the integral in the form
$$ 
\int \alpha(\overline{a}-\overline{b},a-b)\, d\overline{a}_{I}\,da_{I}=
\int\,\alpha(\overline{a},a)\, d\overline{a}_{I}\,da_{I}
$$
or even more generally 
\begin{equation}\label{gr5}
\int \alpha(\overline{a}-\overline{b},a-b;\overline{b},b,\overline{c},c)\, d\overline{a}_{I}\,da_{I}=
\int\,\alpha(\overline{a},a;\overline{b},b,\overline{c},c)\, d\overline{a}_{I}\,da_{I},
\end{equation}
valid as a relation in $\cB$, the Grassmann algebra generated by the $\overline{b}_i$ and $b_i$ and 
possibly additional Grassmann variables $\overline{c}_k$ and $c_k$. 
This seemingly trivial relation will also become very useful below.
Let $A$ be any complex $n\times n$ matrix and choose 
$I_n=\{1,2,\cdots,n\}$ to be the index set. Set  
$$
\overline{a}\cdot Aa=\sum_{1\le i,j \le n}\,\overline{a}_i A_{ij}a_j. 
$$ 
The the Gauss -- Grassmann integral can be calculated 
\begin{equation}\label{gr7}
\int \e^{-\overline{a}\cdot Aa}\,d\overline{a}_{I_n}\,da_{I_n}=\det A,
\end{equation}
the Gaussian distribution analogue of which is relation\eqref{gauss02} in the Appendix. 
In a further analogy to ordinary Gaussian distributions, see \eqref{gauss6} below, 
we obtain 
\begin{equation}\label{gr11}
\begin{aligned} 
\int\, \overline{a}_i\, \e^{-\overline{a}\cdot Aa}\,d\overline{a}_{I_n}\,da_{I_n}&=
\int\,a_i\, \e^{-\overline{a}\cdot Aa}\,d\overline{a}_{I_n}\,da_{I_n}=0\\
\int\, a_i\,\overline{a}_j\, \e^{-\overline{a}\cdot Aa}\,d\overline{a}_{I_n}\,da_{I_n}
&=\det A\:\:A^{-1}_{ij}.
\end{aligned}
\end{equation}
We need an extension of \eqref{gr7} when $\det A\neq 0$. Set 
$$
\overline{b}\cdot a=-a\cdot\overline{b}=\sum_{1\le i\le n}\overline{b}_i a_i ,
\quad \overline{a}\cdot b=-b\cdot\overline{a}= 
\sum_{1\le i\le n}\overline{a}_i b_i.
$$
Then one can prove 
\begin{equation}\label{gr10} 
\int \e^{-\overline{a}\cdot Aa +\overline{b}\cdot a+\overline{a}\cdot b}\,d\overline{a}_{I_n}\,da_{I_n}=\det A \:\:
\e^{\overline{b}\cdot A^{-1}b}
\end{equation}
by using the translation invariance \eqref{gr5} of the integral and by completing the square in the 
exponent.
Finally we provide a variant of \eqref{gr10}, which will become important below. 
For any $1\le p\le n$ consider $I_p$ as a subset of $I_n$ and let $I^c_p=\{p+1,\cdots,n\}$ be its 
complement. Let $\overline{a}^{(1)},a^{(1)}$ denote the set of Grassmann variables $\overline{a}_i,a_1$ 
with $1\le i\le p$ and $\overline{a}^{(2)},a^{(2)}$ those with $p+1\le i\le n$.
For a matrix $A$ as before, consider the corresponding matrix block decomposition  
\begin{equation}\label{Ablock}
A=\begin{pmatrix}A_{11}&A_{12}\\A_{21}&A_{22}\end{pmatrix}.
\end{equation}
where $A_{11}$ is a $p\times p$ matrix, $A_{12}$ a $p\times(n-p$ matrix etc. and set 
\begin{align}\label{gr110}
\overline{a}^{(1)}\cdot A_{11}a^{(1)}&=\sum_{1\le i,j\le p}\overline{a}_iA_{ij}a_j,\qquad\qquad
\overline{a}^{(1)}\cdot A_{12}a^{(2)}=\sum_{1\le i\le p, p+1\le j\le n}\overline{a}_iA_{ij}a_j \\\nonumber 
\overline{a}^{(2)}\cdot A_{21}a^{(1)}&=\sum_{p+1\le i\le n,1\le j\le p}\overline{a}_iA_{ij}a_j,\quad
\overline{a}^{(2)}\cdot A_{22}a^{(2)}=\sum_{p+1\le i,j\le n}\overline{a}_iA_{ij}a_j\\\nonumber
\end{align}
such that 
\begin{equation}\label{gr111}
\overline{a}\cdot Aa=\overline{a}^{(1)}\cdot A_{11}a^{(1)}+\overline{a}^{(1)}\cdot A_{12}a^{(2)}
+\overline{a}^{(2)}\cdot A_{21}a^{(1)}+\overline{a}^{(2)}\cdot A_{22}a^{(2)}
\end{equation}
holds.
\begin{lemma}\label{lem:a}
Let the $(n-p)\times (n-p)$ matrix $A_{22}$ be invertible and set 
$$
\widehat{A}=A_{11}-A_{12}A_{22}^{-1}A_{21},
$$ 
a $p\times p$ matrix. Then 
\begin{equation}\label{gr112}
\int \e^{-\overline{a}\cdot Aa}\,d\overline{a}_{I_n^c}\,da_{I_n^c}=\det A_{22} \cdot\:
\e^{-\overline{a}^{(1)}\cdot \widehat{A}a^{(1)}}
\end{equation}
holds. Therefore the relation
\begin{equation}\label{gr113}
\det A=\det A_{22} \cdot\det \widehat{A} 
\end{equation}
is valid. If in addition $\widehat{A}$ is invertible, the matrix elements of its inverse are the 
corresponding ones for $A^{-1}$
\begin{equation}\label{gr114}
\widehat{A}^{-1}_{\quad ij}=A^{-1}_{\quad ij}\;,\qquad 1\le i,j\le p.
\end{equation}
\end{lemma}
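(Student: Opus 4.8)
The plan is to obtain \eqref{gr112} by applying the source formula \eqref{gr10} to the lower block $A_{22}$ alone, with the first-block variables $\overline{a}^{(1)},a^{(1)}$ serving as the external Grassmann variables $\overline{c},c$ permitted in \eqref{gr5}. Using the block decomposition \eqref{gr111} of the exponent, I would first factor the term $\e^{-\overline{a}^{(1)}\cdot A_{11}a^{(1)}}$ out of the $I_n^c$-integration, since it contains none of the integration variables. What remains inside is $\e$ to the power $-\overline{a}^{(2)}\cdot A_{22}a^{(2)}-\overline{a}^{(1)}\cdot A_{12}a^{(2)}-\overline{a}^{(2)}\cdot A_{21}a^{(1)}$, where the last two terms are linear in $a^{(2)}$ and in $\overline{a}^{(2)}$ respectively. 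Setting
\begin{equation*}
\overline{b}_j=-\sum_{1\le i\le p}\overline{a}_iA_{ij},\qquad b_i=-\sum_{1\le j\le p}A_{ij}a_j,\qquad i,j\in I_n^c,
\end{equation*}
these two terms become exactly the source contributions $\overline{b}\cdot a^{(2)}$ and $\overline{a}^{(2)}\cdot b$ of \eqref{gr10}.

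Applying \eqref{gr10} to the $(n-p)\times(n-p)$ matrix $A_{22}$ then produces the factor $\det A_{22}\cdot\e^{\overline{b}\cdot A_{22}^{-1}b}$, and the heart of the computation is to evaluate this source quadratic form. Substituting the definitions above and recognizing $A_{ij}$ with $i\le p<j$ as the entries of $A_{12}$ and $A_{ij}$ with $j\le p<i$ as those of $A_{21}$, I expect the double sum to collapse to $\overline{b}\cdot A_{22}^{-1}b=\overline{a}^{(1)}\cdot(A_{12}A_{22}^{-1}A_{21})\,a^{(1)}$. Since both this form and $\overline{a}^{(1)}\cdot A_{11}a^{(1)}$ are even, hence central in the supercommutative algebra, the two exponentials combine additively and yield $\e^{-\overline{a}^{(1)}\cdot\widehat{A}a^{(1)}}$, proving \eqref{gr112}.

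The determinant identity \eqref{gr113} then follows by integrating \eqref{gr112} once more over the remaining variables $d\overline{a}_{I_p}\,da_{I_p}$. By the Fubini property noted after \eqref{gr3} the iterated integral equals $\int\e^{-\overline{a}\cdot Aa}\,d\overline{a}_{I_n}\,da_{I_n}=\det A$ via \eqref{gr7}, while on the right-hand side \eqref{gr7} applied to the $p\times p$ matrix $\widehat{A}$ gives $\det A_{22}\cdot\det\widehat{A}$. For the inverse relation \eqref{gr114}, assuming $\widehat{A}$ invertible (so that $\det A=\det A_{22}\det\widehat{A}\neq0$), I would start from the second line of \eqref{gr11} for the full matrix $A$, namely $\int a_i\overline{a}_j\,\e^{-\overline{a}\cdot Aa}\,d\overline{a}_{I_n}\,da_{I_n}=\det A\cdot A^{-1}_{ij}$ with $i,j\le p$. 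As $a_i\overline{a}_j$ is even and involves only first-block variables, it passes through the inner $I_n^c$-integration, which by \eqref{gr112} contributes $\det A_{22}\cdot\e^{-\overline{a}^{(1)}\cdot\widehat{A}a^{(1)}}$; applying the second line of \eqref{gr11} to $\widehat{A}$ then gives $\det A_{22}\cdot\det\widehat{A}\cdot\widehat{A}^{-1}_{ij}$. Comparing the two evaluations and cancelling $\det A=\det A_{22}\det\widehat{A}\neq0$ yields $A^{-1}_{ij}=\widehat{A}^{-1}_{ij}$.

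I expect the main obstacle to be the Grassmann sign bookkeeping: matching the cross terms to the sources of \eqref{gr10} and collapsing $\overline{b}\cdot A_{22}^{-1}b$ both require tracking anticommutation signs through the reordering of $\overline{a}_k$ and $a_l$ past scalars and past the integration variables. These are controlled by the observation that every quadratic form appearing is even, so that factoring it out of an integral and combining exponentials introduce no signs; the only genuine care is needed in verifying that the index contractions reproduce $A_{12}A_{22}^{-1}A_{21}$ with the correct overall sign.
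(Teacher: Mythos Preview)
Your proposal is correct and essentially the same as the paper's proof. The only cosmetic difference is that the paper completes the square explicitly in the exponent and then applies translation invariance \eqref{gr5} together with \eqref{gr7}, whereas you invoke \eqref{gr10} directly---but since \eqref{gr10} is itself derived by exactly that square-completion and translation argument, the two are the same computation packaged differently; your derivations of \eqref{gr113} and \eqref{gr114} match the paper's verbatim.
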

\begin{proof}
We use the decomposition \eqref{gr111} and complete the square to obtain
$$
\e^{-\overline{a}\cdot Aa}=\e^{-(\overline{a}^{(2)}+A_{22}^{-1\,T}A_{21}^{T}\overline{a}^{(1)})
\cdot A_{22}(a^{(2)}+A_{22}^{-1}A_{21}a^{(1)})}\;\e^{-\overline{a}^{(1)}\cdot \widehat{A}a^{(1)}}
$$
and \eqref{gr112} follows from \eqref{gr7} and translation invariance. \eqref{gr113} in turn follows from
\eqref{gr7} by integrating \eqref{gr112} out over the Grassmann variables $\overline{a}_i,a_i$ with 
$i\in I_p$. As for the last claim, if $\widehat{A}$ is invertible, then so is $A$ by \eqref{gr113} 
(and conversely under the present assumption that $A_{22}$ is invertible).
By the last relation in \eqref{gr11} and by \eqref{gr112} for all $ 1\le i,j\le p$
\begin{align*}
\det \widehat{A}\cdot \widehat{A}^{-1}\;_{ ij}&
=\int a_i\overline{a}_j\e^{-\overline{a}\cdot\widehat{A} a}\,d\overline{a}_{I_p}\,da_{I_p}\\
&=\frac{1}{\det A_{22}}\int a_i\overline{a}_j\e^{-\overline{a}\cdot A  a}\,d\overline{a}_{I_n}\,da_{I_n}
=\frac{\det A}{\det A_{22}}\cdot A^{-1}_{\quad ij}
\end{align*}
and \eqref{gr114} follows from \eqref{gr113}.
\end{proof}
\begin{remark}\label{re:1}
If $A_{22}$ is not invertible, the left hand side of \eqref{gr112} is still well defined but not the 
right hand side. Relation \eqref{gr113} also follows from the factorization of a block matrix. 
It involves the \emph{Schur complement} of $A_{22}$, which is just $\widehat{A}$, 
see, e.g., \cite{GMW,Zhang}. In addition the inverse of the Schur complement enters the inverse of $A$ as 
one block part and this is just relation \eqref{gr114}
\begin{align}\label{schur}
\begin{pmatrix}A_{11}&A_{12}\\
A_{21}&A_{22}\end{pmatrix}^{-1}=\begin{pmatrix}\widehat{A}^{-1}&
-\widehat{A}^{-1}A_{12}A_{22}^{-1}\\-A_{22}^{-1}A_{21}\widehat{A}^{-1}&A_{22}^{-1}
+A_{22}^{-1}A_{21}\widehat{A}^{-1}A_{12}A_{22}^{-1}\end{pmatrix}.
\end{align}
\end{remark}

%%%%%%%%%%%%%%%%%%%%%%%%%%%%%%%%%%%%%%%%%%%%%%%%%%%%%%%%%%%%%%%%%%%%%%%%%%%%%%%%%%%%%%%%%%%%%%%%%%%%%%
\subsection{Some basic concepts from graph theory}
%%%%%%%%%%%%%%%%%%%%%%%%%%%%%%%%%%%%%%%%%%%%%%%%%%%%%%%%%%%%%%%%%%%%%%%%%%%%%%%%%%%%%%%%%%%%%%%%%%%%%%%
We first recall some notions, which will be useful in the sequel.
A finite noncompact graph is a 4-tuple $\cG=(\cV,\cI,\cE,\partial)$, where
$\cV$ is a finite set of \emph{vertices}, $\cI$ is a finite set of
\emph{internal edges}, $\cE$ is a finite set of \emph{external edges}. Set 
$$
n(\cV)=\mid\cV\mid,\quad n(\cI)=\mid\cI\mid,\quad  n(\cE)=\mid\cE\mid,
$$
the number of elements in these sets. We assume each of these sets $\cV,\cE,\cI$ 
to be ordered in some arbitrary but 
fixed way. This induces an ordering in $\cE\cup\cI$, where by definition elements in $\cE$ come first. 
On the product set $(\cE\cup\cI)\times \cV$ by definition the induced ordering $\preceq$ is such 
$(i,v)\preceq (i^\prime,v^\prime)$ if and only if $v\prec v^\prime$ or $v=v^\prime$ and $i\preceq i^\prime$.
Elements in $\cI\cup\cE$ are called \emph{edges}. The map $\partial$ assigns
to each internal edge $i\in\cI$ an ordered pair of (possibly equal) vertices
$\partial(i):=\{v_1,v_2\}$ and to each external edge $e\in\cE$ a single
vertex $v$. The vertices $v_1=:\partial^-(i)$ and $v_2=:\partial^+(i)$ are
called the \emph{initial} and \emph{final} vertex of the internal edge
$i$, respectively. The vertex $v=\partial(e)$ is the initial vertex of the
external edge $e$. If $\partial(i)=\{v,v\}$, that is,
$\partial^-(i)=\partial^+(i)$ then $i$ is called a \emph{tadpole}. 
However, to facilitate our exposition, we will exclude tadpoles. 
Two vertices $v$ and $v^\prime$ are \emph{adjacent}, if there is $i\in\cI$ such that these 
vertices form $\partial(v)$.
To any $v\in \cV$ we associate the set of edges terminating at $v$, 
$\cI(v)=\{i\in \cE\cup\cI\mid v\in \partial(i)\}$. We set $n(v)=\mid\cI(v)\mid$, 
the number of edges terminating at $v$. Each of these sets inherits an ordering from the ordering  
of $\cE\cup\cI$. Also 
$$
n(v,v^\prime)= n(v^\prime,v)=\mid \cI(v)\cap \cI(v^\prime)\mid, \quad v\neq v^\prime 
$$
is the number of (internal) edges connecting $v$ with $v^\prime$, so 
$n(v,v^\prime)\le \min (n(v),n(v^\prime))$. $n(v,v^\prime)$ is called the \emph{connectivity matrix}
of the graph $\cG$. $\cI(v)\cap \cI(v^\prime)\subseteq \cI$ holds for 
$v\neq v^\prime $ and likewise this set inherits an ordering from the ordering of $\cI$.
As (unordered) sets each $\cI(v)$ is a disjoint union
$$
\cI(v)=(\cI(v)\cap\cE)\cup_{v^\prime:\,v^\prime\neq v}(\cI(v)\cap\cI(v^\prime)).
$$
By definition, a graph is \emph{compact} if $\cE=\emptyset$, otherwise it is
\emph{noncompact}. Throughout the whole work we will assume that the graph $\cG$ 
is connected, that is, for any $v,v^\prime\in V$ there is an ordered sequence $\{v_1=v,
v_2,\ldots, v_n=v^\prime\}$ such that any two successive vertices in this
sequence are adjacent. In particular, this implies that any vertex of the
graph $\cG$ has nonzero degree, i.e., for any vertex $v$ there is at least one
edge with which it is incident, $n(v)>0$. In addition $n(\cI)\ge n(\cV)-1$ is valid.
By definition a \emph{star graph} is a connected graph which has no internal edges, only one vertex, 
and at least one external edge. 

A graph can be equipped as follows with a metric structure. To any internal
edge $i\in\cI$ we associate an interval $[0,a_i]$ with $a_i>0$
such that the initial vertex of $i$ corresponds to $x=0$ and the terminal
one - to $x=a_i$. Any external edge $e\in\cE$ will be associated with a
semi-line $[0,\infty)$ such that $\partial(e)$ corresponds to $x=0$. 
We call the number $a_i$ the \emph{length} of the internal
edge $i$. The set of lengths $\{a_i\}_{i\in\cI}$, which will also be treated
as an element of $\R^{|\cI|}$, will be denoted by $\underline{a}$. A
compact or noncompact graph $\cG$ endowed with a metric structure is called
a \emph{metric graph} $(\cG,\underline{a})$.

%%%%%%%%%%%%%%%%%%%%%%%%%%%%%%%%%%%%%%%%%%%%%%%%%%%%%%%%%%%%%%%%%%%%%%%%%%%%%%%%%%%%%%%%%%%%%%%%%%%%%

\section{The Fermionic  Construction}\label{sec:3}
Given a graph $\cG$, we introduce the following data. To each vertex $v$ we associate a complex 
$n(v)\times n(v)$ matrix $X(v)$ indexed by the set $\cI(v)$. We call $X(v)$ a \emph{vertex matrix}. 
In addition complex, invertible  $n(v,v^\prime)\times n(v,v^\prime)$ matrices $V(v,v^\prime)$ 
 are given for any pair $v\neq v^\prime$. The associated index sets are $\cI(v)\cap\cI(v^\prime)$ 
and they are supposed to satisfy 
\begin{equation}\label{symmetry}
V(v,v^\prime)=V(v^\prime,v)^{-1}. 
\end{equation}
$V(v,v^\prime)$ is called a \emph{connecting matrix} between $v$ and $v^\prime$. 
We write $\underline{X}_\cG=\{X(v)\}_{v\in\cV}$ and $\underline{V}_\cG=\{V(v,v^\prime)\}_{v\neq v^\prime\in\cV}$ 
for these two sets of data. Figure \ref{fig:2} gives a pictorial description for a graph with three vertices.
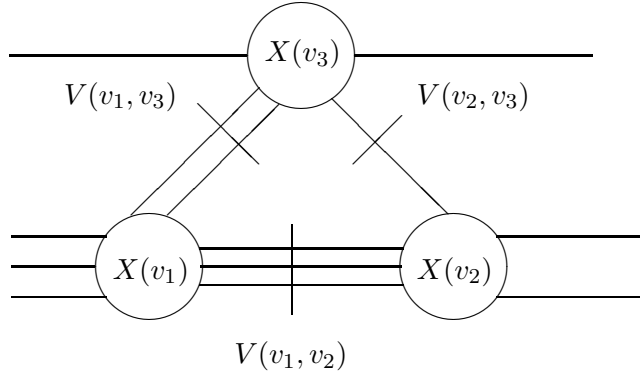
\begin{figure}[htb]
\setlength{\unitlength}{0.8cm}
\begin{picture}(12,8)

\put(1,3){\circle{6.0}}
\put(6,3){\circle{6.0}}
\put(3.5,6.5){\circle{6.0}}

\put(-1.3,6.5){\line(1,0){3,9}}
\put(4.38,6.5){\line(1,0){3,9}}

\put(1.85,3){\line(1,0){3,3}}
\put(1.83,3.3){\line(1,0){3,35}}
%\put(1.66,3.6){\line(1,0){3,68}}

\put(1.83,2.7){\line(1,0){3,35}}
%\put(1.66,2.4){\line(1,0){3,68}}

\put(-1.26,3.5){\line(1,0){1,56}}
\put(-1.26,3){\line(1,0){1,35}}
\put(-1.26,2.5){\line(1,0){1,56}}

\put(6.7,3.5){\line(1,0){2,35}}
\put(6.7,2.5){\line(1,0){2,35}}

\put(0.7,3.85){\line(1,1){2,11}}
\put(1.3,3.85){\line(1,1){1,83}}
\put(4.0,5.78){\line(1,-1){1,91}}

\put(1.8,5.7){\line(1,-1){1,0}}
\put(-0.4,5.7){$V(v_1,v_3)$}

\put(3.35,3.7){\line(0,-1){1,5}}
%\put(2.7,4.0){$n(v_1,v_2)=3$}
\put(2.4,1.4){$V(v_1,v_2)$}

\put(4.35,4.7){\line(1,1){0,8}}
\put(5.4,5.7){$V(v_2,v_3)$}

\put(0.4,2.8){$X(v_1)$}
\put(5.4,2.8){$X(v_2)$}
\put(2.9,6.4){$X(v_3)$}

%\put(-2.0,4.4){$n(v_1,v_3)=2$}
%\put(6.7,4.4){$n(v_2,v_3)=2$}
\end{picture}

\caption{\label{fig:2} A graph with three vertices, $7$ external and 
$6$ internal edges. $X(v_1)$ is a $8\times 8$, $X(v_2)$ a $6\times 6$ and  $X(v_3)$ a $5\times 5$ matrix.
$V(v_1,v_2)$ is a invertible $3\times 3$ matrix, $V(v_2,v_3)$ is just a non-vanishing complex number, while 
$V(v_1,v_3)$ is an invertible $2\times 2$ matrix.} 
\end{figure}
\begin{remark}\label{re:scattering}
Within the context of scattering on quantum graphs, see \cite{KS4,KS5} 
$X(v)$ is the unitary scattering matrix $S(v;E)$ at energy $E>0$ on a single vertex graph with $n(v)$ 
external edges and $V(v,v^\prime)$ is a unitary diagonal matrix 
\begin{equation}\label{metricconn}
V(v,v^\prime)=\diag\left(\e^{\ii \sqrt{E}a_i}\right)_{i\in\cI(v)\cap \cI(v^\prime)}.
\end{equation}
\end{remark}
We also introduce a total of $2(\mid\cE\mid +2\mid\cI\mid)$ anti-commuting Grassmann variables
\begin{equation}\label{grassmannset}
\begin{aligned}
\overline{\eta}_{e,v},\, \eta_{e,v},\, d\overline{\eta}_{e,v},\, d\eta_{e,v},\quad & e \in \cE,\,
v=\partial(e)\\
\overline{\eta}_{i,v},\, \eta_{i,v},\,d\overline{\eta}_{i,v},\, d\eta_{i,v},\quad & i\in\cI,v\in\partial(i)
\end{aligned} 
\end{equation}
and the associated volume forms given as the anti-ordered and ordered products
\begin{equation}\label{volume}
\begin{aligned}
d\overline{\eta}_\cE d\eta_\cE&=\buildrel\curvearrowleft\over\prod_{e\in\cE,v=\partial(e)}
d\overline{\eta}_{e,v}
\buildrel\curvearrowright\over\prod_{e\in\cE,v=\partial(e)}d\eta_{e,v}\\
d\overline{\eta}_\cI d\eta_\cI&=\buildrel\curvearrowleft\over\prod_{i\in\cI,v\in\partial(i)}
d\overline{\eta}_{i,v}
\buildrel\curvearrowright\over\prod_{i\in\cI,v\in\partial(i)}d\eta_{i,v}\\
&\\
d\overline{\eta}_{\cE\cup\cI} d\eta_{\cE\cup\cI}&
=d\overline{\eta}_\cE d\eta_\cE d\overline{\eta}_\cI d\eta_\cI
=d\overline{\eta}_\cE d\overline{\eta}_\cI d\eta_\cE d\eta_\cI=
d\overline{\eta}_\cE d\overline{\eta}_\cI d\eta_\cI \,d\eta_\cE. 
\end{aligned}
\end{equation}
Recall that $\cI(v)\cap \cE$ may be non-empty, so in order to have a compact notation 
we have added the index $v=\partial(e)$ in the definition of $\overline{\eta}_{e,v}$. On the other hand 
$\cI(v)\cap\cI(v^\prime)\cap \cE$ is always empty. For further reference we write 
\begin{equation}
(\cE\cup\cI)\triangleright \cV=\{i,v\}_{i\in\cI(v),v\in\cV}\subset (\cE\cup\cI)\times \cV
\end{equation}
for this index set and with the ordering induced by that of $(\cE\cup\cI)\times \cV$. Set 
\begin{equation}\label{gr12}
\begin{aligned}
\overline{\eta}\cdot \cL(v)\eta&=\sum_{i,j\in\cI(v)}\overline{\eta}_{i,v}X(v)_{ij}\eta_{j,v}\\
\overline{\eta}\cdot \cL(v,v^\prime)\eta&
=-\sum_{i,j\in\cI(v)\cap\cI(v^\prime)}\overline{\eta}_{i,v}V(v,v^\prime)_{ij}\eta_{j,v^\prime}.
\end{aligned}
\end{equation}
with the convention that $\overline{\eta}\cdot \cL(v,v^\prime)\eta=0$ if 
$\cI(v)\cap\cI(v^\prime)=\emptyset$.
Define the \emph{quadratic, fermionic Lagrangian} as
\begin{equation}\label{gr13}
\overline{\eta}\cdot \cL_\cG\, \eta=\sum_{v}\overline{\eta}\cdot \cL(v)\,\eta
+\sum_{v\neq v^\prime}\overline{\eta}\cdot \cL(v,v^\prime)\,\eta.
\end{equation}
We decompose the set of Grassmann variables into \emph{exterior} and \emph{interior variables} 
$$
\begin{aligned}
\overline{\eta}_{\cE}&=\{\overline{\eta}_{e,v=\partial(e)}\}_{e\in\cE},\quad 
\eta_{\cE}=\{\eta_{e,v=\partial(e)}\}_{e\in\cE}\\
\overline{\eta}_{\cI}&=\{\overline{\eta}_{i,v}\}_{i\in\cI,v\in\partial(i)},\quad 
\eta_{\cI}=\{\eta_{i,v}\}_{i\in\cI,v\in \partial(i)}.
\end{aligned}
$$
and correspondingly we set 
\begin{align}\label{gr14}
\overline{\eta}\cdot \cL_\cG\, \eta&=\overline{\eta}_\cE\cdot \cL_\cE\, \eta_\cE+
\overline{\eta}_\cE\cdot \cL_{\cE,\cI}\, \eta_\cI+\overline{\eta}_\cI\cdot \cL_{\cI,\cE}\, \eta_{\cE}+
\overline{\eta}_\cI\cdot \cL_\cI\, \eta_\cI\\\nonumber
&=(\overline{\eta}_\cI+(\cL_\cI^{-1})^T(\cL_{\cE,\cI})^T\, \overline{\eta}_\cE)\cdot \cL_\cI\,
(\eta_\cI+\cL_\cI^{-1}\cL_{\cI,\cE}\, \eta_\cE)\\\nonumber
&\qquad \qquad+\overline{\eta}_\cE\cdot
\left(\cL_\cE-\cL_{\cE,\cI}\cL_\cI^{-1}\cL_{\cI,\cE}\right)\eta_\cE.\nonumber
\end{align}
In analogy to \eqref{Ablock} the first line just corresponds to the following block matrix representation
\begin{equation}\label{gr15}
\cL_\cG=\left(\begin{array}{cc}\cL_\cE&\cL_{\cE,\cI}\\
\cL_{\cI,\cE}&\cL_{\cI}
\end{array}\right)
\end{equation} 
up to a different index ordering. Provided the matrix $\cL_\cI$ is invertible, we may define
its Schur complement
\begin{equation}\label{gr16}
\cK_\cG=\cL_\cE-\cL_{\cE,\cI}\cL_\cI^{-1}\cL_{\cI,\cE}.
\end{equation}
It will also be convenient to introduce the notation 
\begin{equation}\label{gr161}
\cT_\cG=\cL_{\cI}
\end{equation} 
in order to indicate the dependence on $\cG$.
\begin{theorem}\label{grassmann:theo}
If $\cL_\cI$ is invertible, then 
\begin{equation}\label{gr17}
\int\; \e^{-\;\overline{\eta}\cdot\cL_\cG\,\eta}\;d\overline{\eta}_\cI\, d\eta_\cI=
\det\cT_\cG \cdot\e^{-\;\overline{\eta}_\cE\cdot\cK_\cG\,\eta_\cE}
\end{equation}
and thus also $\det \cL_\cG=\det\cT_\cI \cdot \det\cK_\cG$ hold. In particular, if in 
addition $\cL_\cG$ is invertible, then $\cK_\cG$ is also 
invertible and the matrix elements of its inverse are those of the corresponding ones for 
$\cL_\cG^{-1}$ itself
\begin{align}\label{gr170}
(\cK_\cG^{-1})\;_{e,v=\partial(e);e^\prime,v^\prime=\partial(e^\prime)}&=
(\cL_\cG^{-1})\;_{e,v=\partial(e);e^\prime,v^\prime=\partial(e^\prime)}.
\end{align}
\end{theorem}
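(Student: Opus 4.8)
The plan is to recognize this theorem as, up to a reordering of indices, precisely Lemma~\ref{lem:a} applied to the block matrix $\cL_\cG$ of \eqref{gr15}. First I would fix the dictionary between the two statements: the matrix $A$ of the lemma becomes $\cL_\cG$, the retained block $A_{11}$ becomes the exterior block $\cL_\cE$, the off-diagonal blocks $A_{12}$ and $A_{21}$ become $\cL_{\cE,\cI}$ and $\cL_{\cI,\cE}$, and the block $A_{22}$ to be integrated out becomes the interior block $\cL_\cI=\cT_\cG$. Under this dictionary the Schur complement $\widehat{A}$ of the lemma is exactly $\cK_\cG$ as defined in \eqref{gr16}, and the hypothesis that $A_{22}$ be invertible is the hypothesis that $\cT_\cG$ be invertible.

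Granting the dictionary, each assertion of the theorem is the transcription of an assertion of the lemma. Relation \eqref{gr17} is the transcription of \eqref{gr112}, now with the integration carried out over the interior volume form $d\overline{\eta}_\cI\,d\eta_\cI$; the completing-the-square identity that powers it is nothing but the factorization already displayed in the second and third lines of \eqref{gr14}. The determinant factorization $\det\cL_\cG=\det\cT_\cG\cdot\det\cK_\cG$ is the transcription of \eqref{gr113}, obtained by integrating both sides of \eqref{gr17} over the exterior volume form $d\overline{\eta}_\cE\,d\eta_\cE$ and applying \eqref{gr7} to each side. Finally \eqref{gr170} is the transcription of \eqref{gr114}, with the surviving indices $1\le i,j\le p$ of the lemma replaced by the exterior labels $(e,v=\partial(e))$.

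The one point that genuinely needs care --- and the place where the phrase ``up to a different index ordering'' in the remark following \eqref{gr15} must be justified --- is that the ordering on $(\cE\cup\cI)\triangleright\cV$ sorts by vertex first, so exterior and interior Grassmann variables are interleaved rather than having all of $\cE$ precede all of $\cI$. To reduce to the lemma I would permute the variables so that every exterior pair $(\overline{\eta}_{e,v},\eta_{e,v})$ precedes every interior pair $(\overline{\eta}_{i,v},\eta_{i,v})$. The observation that makes this harmless is that each volume element $d\overline{\eta}_{i,v}\,d\eta_{i,v}$ is an even element of the Grassmann algebra, hence central, so regrouping $d\overline{\eta}_\cI\,d\eta_\cI$ and separating it from $d\overline{\eta}_\cE\,d\eta_\cE$ introduces no sign; by the same evenness the exponential $\e^{-\overline{\eta}\cdot\cL_\cG\eta}$ commutes with everything, and the step-by-step integration permitted by \eqref{gr3} goes through verbatim. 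I expect this sign bookkeeping under reordering to be the only real obstacle; once it is settled, \eqref{gr17}, the determinant identity and \eqref{gr170} all follow by direct appeal to Lemma~\ref{lem:a}.
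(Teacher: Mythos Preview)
Your proposal is correct and matches the paper's approach exactly: the paper's own proof is the single sentence ``In view of \eqref{gr14}, this theorem is a direct consequence of Lemma~\ref{lem:a}.'' Your dictionary between $\cL_\cG$ and the block matrix $A$ of the lemma, together with the observation that the completing-the-square identity is already displayed in \eqref{gr14}, is precisely what is being invoked, and your remarks on the index reordering simply spell out what the paper leaves implicit in the phrase ``up to a different index ordering.''
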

\begin{proof}
In view of \eqref{gr14}, this theorem is a direct consequence of Lemma \ref{lem:a}.
\end{proof}

\subsection{The generalized star product and two vertex graphs}\label{subsec:2}
In this subsection we will show, how in the case of any two vertex graph $\cG_2$ 
the above construction of $\cK_{\cG_2}$ out of $\cL_{\cG_2}$ is 
also obtained from the generalized star product introduced in \cite{KS4,KS5}.  
The vertices are denoted as $v$ and $v^\prime$. Figure \ref{fig:1} serves as an illustration.

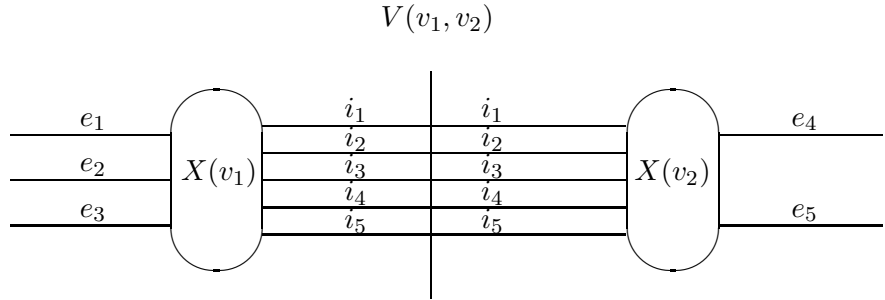
\begin{figure}[htb]
\setlength{\unitlength}{1.2cm}
\begin{picture}(12,6)

\put(3,3){\oval(1.0,2.0)}
\put(8,3){\oval(1.0,2.0)}

\put(3.5,3){\line(1,0){3,98}}
\put(3.5,3.3){\line(1,0){3,98}}
\put(3.5,3.6){\line(1,0){3,98}}
\put(3.5,2.7){\line(1,0){3,98}}
\put(3.5,2.4){\line(1,0){3,98}}

\put(0.74,3.5){\line(1,0){1,75}}
\put(0.74,3){\line(1,0){1,75}}
\put(0.74,2.5){\line(1,0){1,75}}

\put(8.5,3.5){\line(1,0){1,8}}
\put(8.5,2.5){\line(1,0){1,8}}

\put(5.35,4.2){\line(0,-1){2,5}}

\put(4.8,4.7){$V(v_1,v_2)$}
\put(2.61,3.0){$X(v_1)$}
\put(7.58,3.0){$X(v_2)$}

%\put(4.7,2.0){$p=5$}
%\put(0.8,2.8){$n_1=3$}
%\put(9.5,2.8){$m_1=2$}

\put(4.4,3.7){$i_1$}
\put(5.9,3.7){$i_1$}
\put(4.4,3.38){$i_2$}
\put(5.9,3.38){$i_2$}
\put(4.4,3.08){$i_3$}
\put(5.9,3.08){$i_3$}
\put(4.4,2.78){$i_4$}
\put(5.9,2.78){$i_4$}
\put(4.4,2.47){$i_5$}
\put(5.9,2.47){$i_5$}

\put(1.5,3.6){$e_1$}
\put(1.5,3.1){$e_2$}
\put(1.5,2.6){$e_3$}

\put(9.3,3.6){$e_4$}
\put(9.3,2.6){$e_5$}
\end{picture}
\caption{\label{fig:1}   Pictorial description of a two vertex graph $\cG_2$ 
appearing in the definition of the generalized star product. $X(v_1)$ is  
an $8\times 8$ matrix, $X(v_2)$ a $7\times 7$ matrix and $V(v_1,v_2)$ a $5\times 5$ matrix. 
The graph has 5 external edges $e_1,\cdots e_5$ and 5 internal edges $i_1,\cdots i_5$. 
The index orderings of $\cE\cup\cI$ and hence of $\cI$ are 
%\begin{align*}
$$
\hspace{-2.5cm}
\cE\cup \cI=\{e_1,\cdots,e_5, i_1,\cdots ,i_5\},\;\cI=\{i_1,\cdots,i_5\}=\cI(v_1)\cap \cI(v_2)
%\end{align*}
$$
with 
$\cI(v_1)=\{e_1,e_2,e_3,i_1,\cdots,i_5\},\;\cI(v_2)=\{i_1,\cdots,i_5,e_4,e_5\}$.
$\cL_{\cG_2}$ is a $15\times 15$ matrix and $\cL_\cE$ a $5\times 5$ matrix.}
\end{figure}
Let the matrix $X(v_1)$, indexed by $\cI(v_1)$, be given in a $2\times 2$ block form 
\begin{equation}\label{gr18}
X(v_1)=\begin{pmatrix}A&B\\C&D\end{pmatrix}
\end{equation}
where $A$ is an $n_1\times n_1$ matrix, $B$ an $n_1\times p$ matrix, $C$ an $p\times n_1$ matrix 
and finally $D$ a $p\times p$ matrix. Here $n_1=\mid\cE\cap \cI(v_1)\mid$ is the number of external 
edges $e$ terminating at $v_1$, that is $\partial(e)=v_1$. 
$p=n(\cI)$ the number of internal lines such that $n_1+p=n(v_1)$. 
Thus for example $A$ is indexed by $\cE\cap \cI(v_1)$ while $D$ is indexed by $\cI\cap\cI(v)$.
Similarly, write $X(v_2)$, indexed by $\cI(v_2)=(\cI(v_1))$, in 
a $2\times 2$ block matrix form 
\begin{equation}\label{gr19}
X(v_2)=\begin{pmatrix}E&F\\G&H\end{pmatrix}
\end{equation}
where $H$ is a $p\times p$ matrix, $G$ a $p\times m_1$ matrix etc. 
Here $m_1=\mid\cE\cap \cI(v_2)\mid$ is the number of external edges $e$ terminating at $v^\prime$, 
that is $\partial(e)=v_2$. Thus e.g. $E$ is indexed by $\cE\cap \cI(v_2)$. 
Then the $(n_1+p+m_1+p)\times (n_1+p+m_1+p)$ matrix $\cL_{\cG_2}$, which is obtained from 
$X(v_1),X(v_2)$ and $V=V(v_1,v_2)$, takes the form 
\begin{align}\label{gr20}
\cL_\cG&=\begin{pmatrix}\;\;A&\;\,B&\;\;0&\;0\\\;C&\;\,D&\;\;0&\;-V^{-1}\\
                        \;0&\;\,0&\;\;E&\;F\\\;0&\;\,-V&\;\;G&\;H\;
\end{pmatrix}\\\nonumber
&\quad\quad\begin{array}{cccc}\underbrace{\:}_{n_1}&\underbrace{\,}_p
&\underbrace{\,}_{m_1}&\underbrace{\,}_{p}\end{array}
\end{align}
which in a canonical way is indexed by the set $(\cE\cup\cI)\triangleright\cV$.
This gives  
\begin{align}\label{gr21}
\cL_\cE&=\begin{pmatrix}A&0\\0&E\end{pmatrix},\qquad
\cL_{\cE,\cI}=\begin{pmatrix}B&0\\0&F\end{pmatrix}\\\nonumber
\cL_{\cI,\cE}&=\begin{pmatrix}C&0\\0&G\end{pmatrix},\qquad 
\cT_{\cG_2}=\cL_\cI=\begin{pmatrix}D&-V^{-1}\\-V&H\end{pmatrix}
\end{align}
where $\cL_\cE$ is a $\mid \cE\mid \times \mid \cE\mid$ matrix, etc.
Therefore by \eqref{gr16}
\begin{align}\label{gr22}
\cK_{\cG_2}&=\begin{pmatrix}A&0\\0&E\end{pmatrix}-\begin{pmatrix}B&0\\0&F\end{pmatrix}
\begin{pmatrix}D&-V^{-1}\\-V&H\end{pmatrix}^{-1}\begin{pmatrix}C&0\\0&G\end{pmatrix}\\
&=\begin{pmatrix}A-B(D-V^{-1}H^{-1}V)^{-1}C&-B(D-V^{-1}H^{-1}V)^{-1}V^{-1}H^{-1}G\\
-F(H-VD^{-1}V^{-1})^{-1}VD^{-1}C&E-F(H-VD^{-1}V^{-1})^{-1}G\end{pmatrix}.\nonumber
\end{align}
We have used \eqref{schur} by which 
\begin{align*}
\begin{pmatrix}D&-V^{-1}\\-V&H\end{pmatrix}^{-1}
&=\begin{pmatrix}(D-V^{-1}H^{-1}V)^{-1}&(D-V^{-1}H^{-1}V)^{-1}V^{-1}H^{-1}\\
(H-VD^{-1}V^{-1})^{-1}VD^{-1}&(H-VD^{-1}V^{-1})^{-1}\end{pmatrix}
\end{align*}
and we have assumed the matrix $\cT_{\cG_2}=\cL_\cI$ to be invertible.
Alternatively, by Theorem \ref{grassmann:theo}, if $\cL_{\cG_2}^{-1}$ is written in 
$4\times 4$ block form like $\cL_{\cG_2}$  
$$
\cL_{\cG_2}^{-1}=\begin{pmatrix}\alpha&\cdot&\beta&\cdot\\
\cdot&\cdot&\cdot&\cdot\\\gamma&\cdot&\delta&\cdot\\\cdot&\cdot&\cdot&\cdot\end{pmatrix}
$$
then 
\begin{equation}\label{gr221}
\cK_{\cG_2}^{-1}=\begin{pmatrix}\alpha&\beta\\\gamma&\delta\end{pmatrix}
\end{equation}
holds. This can be checked by calculating the inverse of $\cL_{\cG_2}$, a tedious but straightforward 
calculation using \eqref{schur} iteratively.
Moreover we have 
\begin{lemma}\label{lem:1} Assume $\cT_{\cG_2}=\cL_\cI$ to be invertible, such that 
$\cK_{\cG_2}$ is well defined. Then the generalized star product $X(v)\star_{V(v,v^\prime)}X(v^\prime)$ 
as introduced in \cite{KS4} is also well defined and both these quantities are equal. 
\end{lemma}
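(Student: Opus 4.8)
The plan is to set the explicit block form \eqref{gr22} of $\cK_{\cG_2}$ side by side with the defining formula of the generalized star product $X(v_1)\star_V X(v_2)$ from \cite{KS4,KS5}, and to reduce their coincidence to a single algebraic tool, the push-through identity
\[
M(\1-DM)^{-1}=(\1-MD)^{-1}M,
\]
together with the symmetry \eqref{symmetry}. First I would recall from \cite{KS4} the precise definition of $X(v_1)\star_V X(v_2)$ and, crucially, its domain of validity. Physically this product sums over all internal reflection loops of a wave that leaves $v_1$, propagates to $v_2$ through $V$, is reflected by the internal block $H$ of $X(v_2)$, returns through $V^{-1}=V(v_2,v_1)$ and is reflected by the internal block $D$ of $X(v_1)$; the product is well defined exactly when the loop operator $\1-DV^{-1}HV$ is invertible (equivalently, by push-through and conjugation, when any one of $\1-V^{-1}HVD$, $\1-VDV^{-1}H$, $\1-HVDV^{-1}$ is).

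The first genuine step is to identify this condition with the hypothesis that $\cT_{\cG_2}=\cL_\cI=\bigl(\begin{smallmatrix}D&-V^{-1}\\-V&H\end{smallmatrix}\bigr)$ is invertible. This is elementary: a pair $(x,y)$ lies in $\Ker\cL_\cI$ iff $y=VDx$ and $(\1-V^{-1}HVD)x=0$, so $\cL_\cI$ is singular precisely when $\1-V^{-1}HVD$ is, and all four loop operators are simultaneously (non)invertible because $\det(\1-PQ)=\det(\1-QP)$. This already establishes the first assertion of the lemma, namely that the star product is well defined whenever $\cK_{\cG_2}$ is.

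It then remains to match the two matrices entry by entry. Having inverted $\cL_\cI$ by \eqref{schur} to obtain \eqref{gr22}, I would rewrite its $(1,1)$ block using $-(D-V^{-1}H^{-1}V)^{-1}=V^{-1}HV(\1-DV^{-1}HV)^{-1}$, which is the push-through identity with $M=V^{-1}HV$, so that $A-B(D-V^{-1}H^{-1}V)^{-1}C$ becomes $A+BV^{-1}HV(\1-DV^{-1}HV)^{-1}C$, exactly the corresponding entry of the star product (the geometric series over loops). The $(2,2)$ block follows by the substitution $v_1\leftrightarrow v_2$, i.e. $D\leftrightarrow H$ and $V\leftrightarrow V^{-1}$, which is legitimate by \eqref{symmetry}; the two off-diagonal blocks are handled by the same push-through together with the elementary rearrangement $V^{-1}H^{-1}-DV^{-1}=-(D-V^{-1}H^{-1}V)V^{-1}$, which converts $-B(D-V^{-1}H^{-1}V)^{-1}V^{-1}H^{-1}G$ into the loop form of the star product's $(1,2)$ entry.

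I expect the principal difficulty to be bookkeeping rather than conceptual. One must keep the orderings of $\cE\cup\cI$ and of $(\cE\cup\cI)\triangleright\cV$ consistent with the conventions of \cite{KS4} so that the four blocks are paired correctly, and, more importantly, one must replace the expressions in \eqref{gr22} that formally contain $D^{-1}$ and $H^{-1}$ by their push-through forms from the outset. The latter are defined under the correct and weaker hypothesis that only $\cT_{\cG_2}$ is invertible, not $D$ or $H$ separately, and it is on this full domain that the identity $X(v_1)\star_V X(v_2)=\cK_{\cG_2}$ is to be read; the equality of rational expressions established where $D,H$ happen to be invertible then extends to it by continuity.
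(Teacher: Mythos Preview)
Your proposal is correct and follows essentially the same route as the paper: both recall the explicit form \eqref{gr23}--\eqref{gr24} of the star product and then identify it blockwise with \eqref{gr22} via the algebraic identities $(D-V^{-1}H^{-1}V)^{-1}=-K_2HV$ and $(H-VD^{-1}V^{-1})^{-1}=-K_1DV^{-1}$, which are exactly your push-through rewritings. You are in fact more careful than the paper on two points it leaves implicit---the equivalence of invertibility of $\cL_\cI$ with that of the loop operator $\1-V^{-1}HVD$ (your kernel argument), and the removal of the spurious appearances of $D^{-1},H^{-1}$ in \eqref{gr22} so that the identity holds on the full domain where only $\cT_{\cG_2}$ is assumed invertible.
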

\begin{proof}
In the present notation, where we recall $V=V(v,v^\prime)$,
\begin{equation}\label{gr23} 
X(v)\star_{V(v,v^\prime)}X(v^\prime)=\left(\begin{array}{cc}A+BK_2HVC&BK_2G\\FK_1C&E+FK_1DV^{-1}G
\end{array}\right)
\end{equation}
holds with 
\begin{equation}\label{gr24}
\begin{aligned}
K_1&=(\1-VDV^{-1}H)^{-1}V\quad=V(\1-DV^{-1}HV)^{-1}\\
K_2&=(\1-V^{-1}HVD)^{-1}V^{-1}=V^{-1}(\1-HVDV^{-1})^{-1},
\end{aligned}
\end{equation}
see Section 4 in \cite{KS4} and Section 3 in \cite{KS5}. We use the relations
$$
\begin{aligned}
(D-V^{-1}H^{-1}V)^{-1}&=-(\1-V^{-1}HVD)^{-1}V^{-1}HV=-K_2HV\\
(H-VD^{-1}V^{-1})^{-1}&=-(\1-VDV^{-1}H)^{-1}VDV^{-1}=-K_1DV^{-1},
\end{aligned}
$$
insert this into \eqref{gr22}. Comparison with \eqref{gr23} gives the claim.
\end{proof}

\subsection{The generalized star product and arbitrary graphs}\label{subsec:3}
We are now able to extend this comparison to the case where the graph has more than two vertices.
The idea for this alternative is 
to carry out the integrations in \eqref{gr17} in steps while using iteratively the 
Grassmannian-Gaussian construction of the star product as given in the previous subsection. This proof will 
give a more explicit representation of $\cK_\cG$ and $\cT_G$ in terms of the original data 
$\cG,\,\underline{X}_\cG$ and $\underline{V}_\cG$. It is important to recall that the data uniquely 
fix $\cL_\cG$. So this alternative proof will be by induction on the number of vertices, by which we 
will construct a sequence of connected graphs $\cG_l$ with $l$ vertices such that 
$\cG_{l=n(\cV)}=\cG$. Similarly we will provide inductively $\cK_{l}$ -- with $\cK_{\cG_{l=n(\cV)}}=\cK_\cG$ 
-- and $\cT_{_l}$, the last one will be given recursively in the form 
\begin{equation}\label{trecurs}
\cT_{l}=\cT_{l-1}\oplus \cT^l
\end{equation}
with suitable $\cT^l$ and where by definition $\cT^1=1$ . 
Here and in what follows we make the notational convention that for any two square 
matrices $M_1$ and $M_2$ their direct sum $M_1\oplus M_2$ is identified with the $2\times 2$ block matrix
$$
\left(\begin{array}{cc}M_1&0\\0&M_2\end{array}\right).
$$
We first construct the $\cG_l$ inductively. As for the case $l=1$, choose any vertex and call it $v_1$. 
Let $\cG_1$ denote a star graph with $n(v_1)$ external lines labeled by  $\cI(v_1)$ as for the graph $\cG$ 
itself. Assume that we have constructed the connected graph $\cG_l$ with the set of 
vertices $\cV_l=\{v_1,\cdots, v_l\}$, named like those of $\cG$. 
Also the edges $i\in\cG_l$ with $v_k\in\partial(i)$ are in one to one 
correspondence with the edges in $\cG$ having $v_k$ in their boundary. Thus we may use $\cI(v_k)$ to 
index these edges. 
Furthermore the set of external and internal edges in $\cG_l$ are such that 
\begin{align}\label{gr241}
\cE_l\cup\cI_l&=\cup_{1\le k\le l}\cI(v_k)\\
\cI_l&=\cup_{1\le k\neq k^\prime\le l}\,\left(\cI(v_k)\cap\cI(v_{k^\prime})\right).\nonumber 
\end{align} 
To obtain $\cG_{l+1}$ from $\cG_l$, observe there is a vertex in $\cG$, denoted $v_{l+1}$, such that  
$$
\bar{n}(v_{l+1})=\sum_{k=1}^l n(v_k,v_{l+1})>0.
$$
$\cG_{l+1}$ is obtained as follows. Take $\cG_{l}$ and a single vertex graph with vertex 
denoted by $v_{l+1}$ and with  $\bar{n}(v_{l+1})$ edges emanating.  
Call this graph $\bar{\cG}(v_{l+1})$. Label its edges by the elements in 
$\cup_{k\le l}\left(\cI(v_k)\cap \cI(v_l)\right)$.
Glue each edge $i\in\cI(v_k)\cap \cI(v_l)$ in $\bar{\cG}(v_{l+1})$ to the edge with the same index in 
$\cG_l$. In case $\cG$ is a metric graph with set of internal edge lengths $\underline{a}$, give the 
resulting internal edge $i$ in $\cG_{l+1}$ the length $a_i$.
To sum up,  the resulting graph $\cG_{l+1}$ has $n(v_k,v_{l+1})$ edges connecting $v_k$ with $v_{l+1}$ 
in $\cG_{l+1}$. In total $\cG_{l+1}$ thus obtained has edges which also are of the form \eqref{gr241} 
with $l$ being replaced by $l+1$.
This concludes the inductive construction of the graphs and gives $\cG$ as $\cG_{l=n(\cV)}$.

$\cV_l$ can be viewed as a subset of $\cV$ and that by \eqref{gr241} $\cE_l\cup\cI_l$ 
can be viewed as a subset of both $\cE_{l+1}\cup\cI_{l+1}$ 
and $\cE\cup\cI$. Similarly $\cI_l$ can be viewed as a subset of both $\cI_{l+1}$ and $\cI$.
For the $\cE_l$ similar relations, however, are not valid, since an edge in $\cE_{l-1}$ can turn into 
an edge in $\cI_{l}$. More explicitly we introduce the sets 
\begin{equation}\label{gr2411}
\bar{\cI}_l=\cE_{l-1}\cap \cI_{l}=\cup_{k:k<l}\left(\cI(v_k)\cap\cI(v_l)\right)
\end{equation}
which will be used in the sequel. They satisfy
\begin{equation}\label{gr2412}
\bar{\cI}_l\cap\bar{\cI}_{l^\prime}=\emptyset,\quad l\neq l^\prime;\qquad 
\cup_{1\le l\le n(\cV)}\bar{\cI}_l=\cI.
\end{equation}
Pictorially this construction can be understood as follows. $\cG_l$ is obtained from $\cG$ 
by cutting any internal edge, which connects any vertex 
$v_k\, (1\le k\le l)$ to any vertex $v$ different from the $v_{k^\prime}\, (1\le k^\prime\le l)$. Any such 
edge is then replaced by an infinite half-line. 
As a consequence of this discussion
\begin{equation}\label{gr2413}
\bar{\cI}_l\triangleright\cV_l\subseteq 
(\cE_l\cup\cI_l)\triangleright \cV_l\subset (\cE_{l+1}\cup\cI_{l+1})\triangleright \cV_{l+1}
\subseteq (\cE\cup\cI)\triangleright \cV,\qquad 1\le l\le n(\cV)-1,
\end{equation}
which induces an ordering on these sets.
In order to construct the $\cK_{l}$ we introduce the matrices 
\begin{equation}\label{gr242}
\overline{V}(v_{l})=V(v_1,v_{l})\oplus V(v_2,v_{l})\cdots\oplus V(v_{l-1},v_{l}),\quad l\ge 2 
\end{equation}
which are invertible. Thus in the example given by Figure \ref{fig:2} $\overline{V}(v_{l=3})$ is a 
$3\times 3$ matrix.

Set $\cK_{1}=X(v_1)$ and inductively 
\begin{equation}\label{gr243}
\cK_{l+1}=\cK_{l}\star_{\:\overline{V}(v_{l+1})} X(v_{l+1}). 
\end{equation}
In particular $\cK_{2}$ is just as given by Lemma \ref{lem:1}. We note that the 
invertibility of a certain matrix is necessary, see the discussion of $\cT_{\cG_2}$ in Subsection 
\ref{subsec:2}. So if the invertibility of certain matrices holds -- see also below -- 
the associativity of the generalized star product \cite{KS4} gives
\begin{equation}\label{gr25}
\cK_{l}=X(v_{1})\star_{V(v_1,v_2)}X(v_{2})\star_{\overline{V}(v_{3})}X(v_{3})
\cdots\star_{\overline{V}(v_{n(l)})}X(v_{l}). 
\end{equation}
We now repeat this construction by performing Grassmann integration over 
$$
\e^{-\overline{\eta}\cdot \cL_{\cG}\eta}
$$
in steps. In order to do this we  
view $\cG_{l}$ as a single vertex graph with a vertex denoted by $\bar{v}_l$ and 
with edges labeled by $\cE_l$. Combine it with
the single vertex graph $\bar{\cG}(v_{l+1})$ and with connecting matrix given as 
$V(\bar{v}_l,v_{l+1})=\overline{V}(v_{l+1})$. Correspondingly we take as data for $\cG_{l+1}$ 
\begin{equation}\label{gr26}
\underline{X}_{l+1}=\{\cK_{l}, X(v_{l+1})\},\quad \underline{V}_{l+1}=
\{V(\bar{v}_l,v_{l+1})=\overline{V}(v_{l+1})\}
\end{equation}
and out of this we form the fermionic Lagrangean
\begin{equation}\label{gr27}
\overline{\eta}\cdot \cL_{l+1}\eta=
\overline{\eta}\cdot \cK_{\cG_l}\eta+\overline{\eta}\cdot X(v_{l+1})\eta
-\overline{\eta}\cdot\overline{V}(v_{l+1})\eta.
\end{equation}
In order not to burden the notation, we have not stated explicitly, which Grassmann variables  
out of the set \eqref{grassmannset} are involved. Indeed, which ones are involved can be read off the 
index set associated to the matrices $\cK_{l},X(v_{l+1})$ and $\overline{V}(v_{l+1})$. 

With this notational convention and by Lemma \ref{lem:1} we obtain 
\begin{equation}\label{gr28}
\det \cT^{l+1}\cdot \e^{-\overline{\eta}\cdot \cK_{l+1}\eta}=\int \e^{-\overline{\eta}\cdot \cL_{l+1}\eta}
d\overline{\eta}_{\bar{I}_{l+1}}d\eta_{\bar{I}_{l+1}}.
\end{equation}
where $\cT^{l+1}=(\cL_{l+1})_{\bar{\cI}_{l+1}}$ -- in an adaption of the notation used in \eqref{gr14} --
is invertible. We recall that by definition $\cT^1=1$. This concludes the recursive construction of 
$\cG_l,\cK_{l}$ and 
\begin{equation}\label{gr29}
\cT_{l}=\oplus_{k=1}^l\cT^k.
\end{equation}

We iterate the recursion \eqref{gr28} in combination with recursion \eqref{gr27}, use \eqref{trecurs}, 
\eqref{gr2412} and \eqref{gr25} to obtain
\begin{equation}\label{gr30}
\det \cT_{l=n(\cV)}\cdot \e^{-\overline{\eta}\cdot \cK_{l=n(\cV)}\eta}=
\int \e^{-\overline{\eta}\cdot \cL_{\cG}\eta}d\overline{\eta}_{\cI}d\eta_{\cI}.
\end{equation}
Comparison with \eqref{gr17}, while keeping Remark \ref{re:1} in mind, gives the main result of this 
article.
\begin{theorem}\label{theo:main} Assume that the matrices $\cT_\cG$ and $\cT^l$ are all invertible, 
such that all $\cT_{l}$ are also invertible. 
Then the quantities $\cK_\cG$ and $\det T_\cG$ as given by  \eqref{gr17} are equal to 
$\cK_{l=n(\cV)}$ and $\det \cT_{l=n(\cV)}$ respectively, 
where $\cK_{l=n(\cV)}$ and $\cT_{l=n(\cV)}$ are defined by \eqref{gr25} and \eqref{gr29}.
\end{theorem}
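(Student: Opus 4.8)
The plan is to observe that both of the claimed equalities express the same Grassmann integral, namely $\int \e^{-\overline{\eta}\cdot\cL_\cG\,\eta}\,d\overline{\eta}_\cI\,d\eta_\cI$, evaluated in two different ways, and then to extract the matrix $\cK$ and the scalar $\det\cT$ by matching the homogeneous parts of the resulting Grassmann element. On the one hand, Theorem \ref{grassmann:theo} evaluates this integral in one stroke via the Schur complement as $\det\cT_\cG\cdot\e^{-\overline{\eta}_\cE\cdot\cK_\cG\,\eta_\cE}$. On the other hand, equation \eqref{gr30} evaluates the very same integral recursively as $\det\cT_{l=n(\cV)}\cdot\e^{-\overline{\eta}\cdot\cK_{l=n(\cV)}\,\eta}$. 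Since these two Grassmann elements are literally equal, I would compare them degree by degree in the free exterior variables $\overline{\eta}_\cE,\eta_\cE$.

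First I would establish \eqref{gr30} carefully. The step-wise integration is legitimate by Fubini's theorem for Berezin integration, which follows directly from the definition \eqref{gr3}; the crucial bookkeeping point is that the sets $\bar{\cI}_l$ partition $\cI$ by \eqref{gr2412}, so that integrating over $d\overline{\eta}_{\bar{\cI}_{l+1}}d\eta_{\bar{\cI}_{l+1}}$ at the $(l{+}1)$-st stage removes each internal edge exactly once and in total exhausts $\cI$. Each single stage is an instance of the two-vertex identity: viewing $\cG_l$ as one effective vertex carrying the vertex matrix $\cK_l$, glued to the star graph $\bar{\cG}(v_{l+1})$ through the connecting matrix $\overline{V}(v_{l+1})$ of \eqref{gr242}, Lemma \ref{lem:1} together with the two-vertex case of Theorem \ref{grassmann:theo} yields precisely the recursion \eqref{gr28} with $\cT^{l+1}=(\cL_{l+1})_{\bar{\cI}_{l+1}}$. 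Here the hypothesis that every $\cT^l$ is invertible is exactly what makes each star product in \eqref{gr243} well defined and each Schur complement exist, and it guarantees the invertibility of $\cT_l=\oplus_{k}\cT^k$ in \eqref{gr29}. Iterating \eqref{gr28} in tandem with the recursions \eqref{trecurs} and \eqref{gr27}, and invoking the associativity of the generalized star product recorded in \eqref{gr25}, accumulates the prefactor into $\det\cT_{l=n(\cV)}$ and the exponent into $\cK_{l=n(\cV)}$, which is \eqref{gr30}.

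With \eqref{gr30} in hand the comparison is immediate. The degree-zero part (set $\overline{\eta}=\eta=0$ as in \eqref{gr2}) of $\det\cT_\cG\cdot\e^{-\overline{\eta}_\cE\cdot\cK_\cG\eta_\cE}$ equals $\det\cT_\cG$, while that of $\det\cT_{l=n(\cV)}\cdot\e^{-\overline{\eta}\cdot\cK_{l=n(\cV)}\eta}$ equals $\det\cT_{l=n(\cV)}$; hence $\det\cT_\cG=\det\cT_{l=n(\cV)}$. Dividing the two equal Grassmann exponentials by this common nonzero scalar and reading off the quadratic part, the bilinear forms $\overline{\eta}_\cE\cdot\cK_\cG\eta_\cE$ and $\overline{\eta}_\cE\cdot\cK_{l=n(\cV)}\eta_\cE$ coincide; since the exterior Grassmann generators are algebraically free, the coefficient of each monomial $\overline{\eta}_{e,v}\eta_{e^\prime,v^\prime}$ must agree, so $\cK_\cG=\cK_{l=n(\cV)}$ as matrices.

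I expect the only genuine obstacle to be the combinatorial bookkeeping in the inductive step: one must verify that the effective single-vertex description of $\cG_l$ by the matrix $\cK_l$ is consistent with the ordering conventions collected in \eqref{gr2413}, and that the variables integrated out at each stage are precisely those indexed by $\bar{\cI}_{l+1}$. The subtle feature here is that an external edge of $\cG_l$ may become internal in $\cG_{l+1}$ (so the decomposition of Grassmann variables into exterior and interior is stage-dependent), which means the index sets must be tracked with care to ensure each edge is counted once. Once this identification is settled, the algebra of the comparison is forced term by term, since matching homogeneous components of equal Grassmann elements is unambiguous.
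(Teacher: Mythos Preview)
Your proposal is correct and follows essentially the same route as the paper: establish \eqref{gr30} by iterating the two-vertex recursion \eqref{gr28} over the partition \eqref{gr2412} of $\cI$ via Fubini, then compare with \eqref{gr17} from Theorem~\ref{grassmann:theo}. The only addition you make is spelling out explicitly how to read off the scalar and matrix equalities from the equality of Grassmann elements by matching homogeneous components, which the paper leaves implicit in the phrase ``Comparison with \eqref{gr17}, while keeping Remark~\ref{re:1} in mind.''
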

Without proof we state that under the assumptions stated actually $\cT_\cG\simeq \cT_{l=n(\cV)}$ holds.

\begin{corollary}
For a given graph $\cG$ let the data $\underline{X}_\cG$ and $\underline{V}_\cG$ consist of unitaries. 
Under the corresponding invertibility assumptions, $\cK_{\cG}$ is also unitary as are in fact all 
$\cK_{\cG_l}$.
\end{corollary}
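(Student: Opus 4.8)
The plan is to reduce everything to the two-vertex case and then ride the inductive scaffolding already built for Theorem \ref{theo:main}. By that theorem, under the stated invertibility assumptions, $\cK_\cG=\cK_{l=n(\cV)}$, and more generally the construction in Subsection \ref{subsec:3} identifies each $\cK_{\cG_l}$ with the matrix $\cK_l$ produced by the iterated generalized star product \eqref{gr25}: one starts from $\cK_1=X(v_1)$ and passes from $\cK_l$ to $\cK_{l+1}=\cK_l\star_{\overline{V}(v_{l+1})}X(v_{l+1})$ as in \eqref{gr243}. So it suffices to show by induction on $l$ that every $\cK_l$ is unitary. The base case $\cK_1=X(v_1)$ is unitary by hypothesis. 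For the step I would isolate a single two-factor statement: if $X$, $X'$ and $V$ are unitary then $X\star_V X'$ is unitary whenever it is well defined. Granting this, the step is immediate, since $\cK_l$ is unitary by the inductive hypothesis, $X(v_{l+1})$ is unitary by hypothesis, and $\overline{V}(v_{l+1})=V(v_1,v_{l+1})\oplus\cdots\oplus V(v_l,v_{l+1})$ of \eqref{gr242} is unitary because a direct sum of unitaries is unitary; viewing $\cG_l$ as the single super-vertex carrying vertex matrix $\cK_l$, exactly as in \eqref{gr26}--\eqref{gr27}, the two-factor statement then applies verbatim to give $\cK_{l+1}$ unitary.

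The heart of the matter is the two-factor statement for $\cK_{\cG_2}=X(v_1)\star_V X(v_2)$, and rather than wrestling with the explicit Schur-complement formula \eqref{gr22} I would argue by norm conservation. Write $\tilde U=X(v_1)\oplus X(v_2)$, which is unitary as a block-diagonal matrix, and split the variables into an exterior vector $z$ and interior input/output vectors $w_{\mathrm{in}},w_{\mathrm{out}}$. The blocks of $\tilde U$ give $u=U_{ee}z+U_{ei}w_{\mathrm{in}}$ and $w_{\mathrm{out}}=U_{ie}z+U_{ii}w_{\mathrm{in}}$, while the internal gluing imposes the feedback constraint $w_{\mathrm{in}}=Ww_{\mathrm{out}}$ with $W=\left(\begin{smallmatrix}0&V^{-1}\\ V&0\end{smallmatrix}\right)$, which is unitary, indeed an involution since $V^\ast=V^{-1}$. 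Eliminating $w_{\mathrm{in}}$ — legitimate precisely because the assumed invertibility of $\cT_{\cG_2}=\cL_\cI$ is equivalent to that of $\1-WU_{ii}$ — produces the map $z\mapsto u$ equal to $U_{ee}+U_{ei}(\1-WU_{ii})^{-1}WU_{ie}$.

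Two checks then finish the argument. First, this feedback map is exactly $\cK_{\cG_2}$: here $U_{ee}=\cL_\cE$, $U_{ei}=\cL_{\cE,\cI}$, $U_{ie}=\cL_{\cI,\cE}$, $U_{ii}=\cL_\cI+\text{(off-diagonal)}$, and a short computation using $W^{-1}=W$ gives $W^{-1}(\1-WU_{ii})=W-U_{ii}=-\cL_\cI$, hence $(\1-WU_{ii})^{-1}W=-\cL_\cI^{-1}$ and the feedback map collapses to $\cL_\cE-\cL_{\cE,\cI}\cL_\cI^{-1}\cL_{\cI,\cE}=\cK_{\cG_2}$, matching \eqref{gr22}. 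Second, the unitarity is the clean part: unitarity of $\tilde U$ yields $\|u\|^2+\|w_{\mathrm{out}}\|^2=\|z\|^2+\|w_{\mathrm{in}}\|^2$, and unitarity of $W$ yields $\|w_{\mathrm{in}}\|^2=\|w_{\mathrm{out}}\|^2$, so $\|u\|^2=\|z\|^2$ for all $z$; a square matrix preserving the norm is unitary.

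The genuine obstacle is bookkeeping rather than depth. One must pin down the exact gluing matrix $W$, with the correct signs and the placement of $V$ versus $V^{-1}$, so that the feedback form provably coincides with \eqref{gr22}; and one must keep the invertibility hypotheses (compare Remark \ref{re:1}) aligned at every stage of the induction, so that each intermediate $\cK_l$ — and thus $\cK_{\cG}$ — is first known to be well defined before its unitarity is asserted. Once the two-factor statement is secured with these conventions fixed, the induction delivers unitarity of all $\cK_{\cG_l}$ and, at $l=n(\cV)$, of $\cK_\cG$ itself.
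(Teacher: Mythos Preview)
Your proposal is correct. Both you and the paper share the same high-level strategy: use the inductive construction behind Theorem \ref{theo:main} to reduce the claim for $\cK_{\cG_l}$ to the two-factor statement that $X\star_V X'$ is unitary whenever $X$, $X'$, $V$ are unitary and the star product is well defined, with the induction step carried by the unitarity of the block-diagonal $\overline{V}(v_{l+1})$ in \eqref{gr242}.

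The difference lies in how that two-factor statement is handled. The paper's proof is a one-line citation: it simply invokes Theorem \ref{theo:main} together with the results of \cite{KS4,KS5}, where the unitarity of the generalized star product of unitaries is already established. You instead supply a self-contained argument via the Redheffer feedback form: writing $\tilde U=X(v_1)\oplus X(v_2)$ and imposing the gluing $w_{\mathrm{in}}=Ww_{\mathrm{out}}$ with the unitary involution $W$, you recover $\cK_{\cG_2}$ as $U_{ee}+U_{ei}(\1-WU_{ii})^{-1}WU_{ie}$ and read off norm preservation from $\|u\|^2+\|w_{\mathrm{out}}\|^2=\|z\|^2+\|w_{\mathrm{in}}\|^2$ together with $\|w_{\mathrm{in}}\|=\|w_{\mathrm{out}}\|$. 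Your check that $(\1-WU_{ii})^{-1}W=-\cL_\cI^{-1}$ via $W-U_{ii}=-\cL_\cI$ is clean and correctly identifies the feedback map with the Schur complement \eqref{gr16}. What your route buys is independence from the cited references and a transparent conceptual picture (norm conservation on the internal channels cancels, leaving an isometry on the external ones); what the paper's route buys is brevity, since the needed lemma is already in the literature it is building on.
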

\begin{proof}
This follows from the theorem and the results in \cite{KS4,KS5}.
\end{proof} 
\begin{remark}\label{re:Srep} 
In view of Remark \ref{re:scattering}, this corollary is relevant in the context of 
quantum scattering theory on graphs. Then $\cK_{\cG}$ is the scattering matrix at a fixed energy $E$
associated with the entire metric graph $\cG$ and where the metric enters through the connecting 
matrices given in the form \eqref{metricconn}.
Relation (3.36) in \cite{KS6} provides another way to obtain this scattering matrix in 
terms of the single vertex scattering matrices. Also in \cite{KS6} a series expansion of every matrix 
element is given. This expansion is indexed by so called walks ${\bf w}$ with 
length $\mid \bf w\mid$ 
and has expansion coefficients 
of the form $\exp(\ii \sqrt{E}|{\bf w}|)$ times a monomial in the single vertex scattering 
matrix elements. It has been used to formulate a new approach to the traveling salesman problem.
\end{remark}

%%%%%%%%%%%%%%%%%%%%%%%%%%%%%%%%%%%%%%%%%%%%%%%%%%%%%%%%%%%%%%%%%%%%%%%%%%%%%%%%%%%%%%%%%%%%%%%%%%%%%%%%

\subsection*{Acknowledgments:} The authors would like to thank M. Karowski for valuable discussions.

\vspace{1cm}
%%%%%%%%%%%%%%%%%%%%%%%%%%%%%%%%%%%%%%%%%%%%%%%%%%%%%%%%%%%%%%%%%%%%%%%%%%%%%%%%%%%%%%%%%%%%%%%%%%%%%%%%

\begin{appendix}
\section{The generalized star product and Gaussian integrals}\label{propagator}
\renewcommand{\theequation}{\mbox{\Alph{section}.\arabic{equation}}}
\setcounter{equation}{0}

%%%%%%%%%%%%%%%%%%%%%%%%%%%%%%%%%%%%%%%%%%%%%%%%%%%%%%%%%%%%%%%%%%%%%%%%%%%%%%%%%%%%%%%%%%%%%%%%%%%%%%%%%

In this appendix we give a \emph{bosonic} discussion using Gaussian distributions.
We start by recalling some basic facts about Gaussian 
distributions, also in order establish notation and for the sake of comparison with our \emph{fermionic}
discussion.
Let $x=(x_1,x_2,\cdots,x_n)$ and $y=(y_1,y_2,\cdots,y_n)$ denote elements in $\R^n$ and set 
$$
y\cdot x=\sum_{i=1}^n y_ix_i=x\cdot y
$$
and for integration $dx=\prod_{i=1}^ndx_i$ denotes the infinitesimal volume element on $\R^n$.
Let $A$ be a real symmetric matrix, which in addition is \emph{positive (definite)}
$$
x\cdot Ax=\sum_{i,j=1}^n x_iX_{ij}x_j>0,\quad x\neq 0
$$
and then we write $A>0$.
Then also $\det A>0$ and in addition $A^{-1}$ exists, is real, symmetric and positive. 
$\1$ denotes the unit matrix in the given context and $A>A^\prime$ means 
$x\cdot Ax>x\cdot A^\prime x$ for all $x\neq 0$. If $\kappa\1<A<\mu\1$, 
then $\mu^{-1}\1<A^{-1}<\kappa^{-1}\1$.

We make the following notational convention. Here and in what follows, whenever $x\in \R^n$ 
stands to the right of an $m\times n$ matrix $A$, then it is viewed 
as a column vector. The outcome $Ax$ will also be interpreted as a column vector in $\R^m$. 
When $x$ stands to the left of $A$ it will be viewed as a row vector.

Define the Gauss distribution with covariance $A>0$ via its probability measure 
\begin{equation}\label{gauss1}
d\mu_A(x)=\frac{(\det A)^{1/2}}{(2\pi)^{n/2}}\e^{-\frac{1}{2}x\cdot Ax}dx
\end{equation}
on $\R^n$. That this is a probability measure follows from
\begin{equation}\label{gauss02}
\int_{x\in\R^n}\e^{-\frac{1}{2}x\cdot Ax}dx=\frac{(2\pi)^{n/2}}{(\det A)^{1/2}}
\end{equation}
and is the analogue of \eqref{gr7}.

The first two moments of the measure $d\mu_A(x)$ are
\begin{equation}\label{gauss6}
\begin{aligned}
\int_{\R^n}x_id\mu_A(x)&=0\\
\int_{\R^n}x_ix_jd\mu_A(x)&=A^{-1}_{ij}
\end{aligned}
\end{equation}
which are the analogues of \eqref{gr11}.

Write any $x\in\R^{n}$ as $x=(x^{(1)},x^{(2)})$ with 
$x^{(1)}=(x_1,\cdots, x_p)\in\R^{p},x^{(2)}=(x_{p+1},\cdots,x_{n})\in\R^{n-p}$ and set
\begin{align*}
x^{(1)}\cdot A_{11}x^{(1)}&=\sum_{i,j\le p}x_iA_{ij}x_j,\qquad x^{(1)}\cdot A_{11}x^{(2) }
=\sum_{i\le p,p+1\le j}x_iA_{ij}x_j\\
x^{(2)}\cdot A_{21}x^{(1)}&=\sum_{p+1\le i,j\le p}x_iA_{ij}x_j,
\qquad x^{(2)}\cdot A_{22}x^{(2)}=\sum_{p+1\le i,j\le p}x_iA_{ij}x_j
\end{align*}
such that we have the decomposition
\begin{align*}
 x\cdot Ax=x^{(1)}\cdot A_{11}x^{(1)}+x^{(1)}\cdot A_{12}x^{(2)}+x^{(2)}\cdot A_{21}x^{(1)}
+x^{(2)}\cdot A_{22}x^{(2)}.
\end{align*}
In other words, we use the $2\times 2$ block decomposition \eqref{Ablock} of the matrix $A$.
Since $A$ is assumed to be positive definite, so are $A_{11}$ and $A_{22}$ and their inverses. Also 
$A_{21}$ is the transpose of $A_{12}$.
So $\widehat{A}=A_{11}-A_{12}A_{22}^{-1}A_{21}$ is a well defined and symmetric $p\times p$ matrix.
In fact it is positive definite, see, e.g., \cite{Zhang}. Actually we need a stronger result. 
\begin{lemma}\label{lem:4}
If $A>\kappa\1$ with $\kappa>0$ holds, then also $\widehat{A}>\kappa\1$ is valid for the Schur 
complement of $A_{22}$.
\end{lemma}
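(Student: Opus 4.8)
The plan is to use the variational characterization of the Schur complement, which falls out of the same completion of the square already used in Lemma \ref{lem:a}. First I would record the bosonic analogue of the identity displayed in the proof of that lemma: for $x=(x^{(1)},x^{(2)})$,
$$
x\cdot Ax=\bigl(x^{(2)}+A_{22}^{-1}A_{21}x^{(1)}\bigr)\cdot A_{22}\bigl(x^{(2)}+A_{22}^{-1}A_{21}x^{(1)}\bigr)+x^{(1)}\cdot\widehat{A}\,x^{(1)},
$$
which is verified by expanding the right-hand side and using $A_{21}=A_{12}^{T}$ together with the symmetry of $A_{22}$. Since $A_{22}$ is a principal submatrix of the positive definite $A$, it is itself positive definite, so the first term on the right is nonnegative and vanishes exactly when $x^{(2)}=x^{(2)}_\ast:=-A_{22}^{-1}A_{21}x^{(1)}$. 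This yields the variational formula $x^{(1)}\cdot\widehat{A}\,x^{(1)}=\min_{x^{(2)}}\,(x^{(1)},x^{(2)})\cdot A\,(x^{(1)},x^{(2)})$.

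Second, I would fix an arbitrary $x^{(1)}\neq 0$ and evaluate the quadratic form at the minimizer. Because $x^{(1)}\neq 0$, the full vector $(x^{(1)},x^{(2)}_\ast)\in\R^{n}$ is nonzero, so the hypothesis $A>\kappa\1$ applies and gives
$$
x^{(1)}\cdot\widehat{A}\,x^{(1)}=(x^{(1)},x^{(2)}_\ast)\cdot A\,(x^{(1)},x^{(2)}_\ast)>\kappa\bigl(\|x^{(1)}\|^{2}+\|x^{(2)}_\ast\|^{2}\bigr)\ge\kappa\,\|x^{(1)}\|^{2}.
$$
Since $x^{(1)}\neq 0$ was arbitrary, this is precisely the assertion $\widehat{A}>\kappa\1$. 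I do not expect a genuine obstacle here; the one point that needs care is the strictness, which is why I apply $A>\kappa\1$ to the nonzero vector $(x^{(1)},x^{(2)}_\ast)$ before discarding the nonnegative term $\kappa\|x^{(2)}_\ast\|^{2}$, rather than trying to argue on $\widehat{A}$ directly.

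As an alternative avoiding the variational identity, I would note $A-\kappa\1>0$ and form the Schur complement of its lower-right block $A_{22}-\kappa\1$, namely $(A_{11}-\kappa\1)-A_{12}(A_{22}-\kappa\1)^{-1}A_{21}$, which is positive definite by the standard fact cited before the lemma. Since $A_{22}-\kappa\1<A_{22}$ forces $(A_{22}-\kappa\1)^{-1}>A_{22}^{-1}$, congruence by $A_{12}$ gives $A_{12}(A_{22}-\kappa\1)^{-1}A_{21}\ge A_{12}A_{22}^{-1}A_{21}$, whence $\widehat{A}-\kappa\1\ge (A_{11}-\kappa\1)-A_{12}(A_{22}-\kappa\1)^{-1}A_{21}>0$. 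This route trades the completion of the square for operator monotonicity of the inverse.
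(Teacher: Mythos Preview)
Your proof is correct, and both routes you sketch work. They are, however, genuinely different from the paper's argument, which is shorter and rests on a different observation: from $A>\kappa\1$ one gets $0<A^{-1}<\kappa^{-1}\1$ by operator monotonicity of the inverse; then the block-inverse formula \eqref{schur} identifies $\widehat{A}^{-1}$ with the upper-left principal block of $A^{-1}$, so $0<\widehat{A}^{-1}<\kappa^{-1}\1$ as well, and inverting once more gives $\widehat{A}>\kappa\1$. In other words, the paper exploits that the inverse of the Schur complement is itself a principal submatrix of $A^{-1}$, rather than working with the Schur complement directly.

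Your first (variational) argument is the most elementary of the three, since it avoids inverting $A$ altogether and makes the strictness transparent; it also immediately generalizes to semidefinite $A_{22}$ once existence of the Schur complement is assumed. Your second argument and the paper's both hinge on operator monotonicity of the inverse, but applied at different places: you shift first and then take a Schur complement, the paper takes the Schur complement (via $A^{-1}$) and then undoes the shift. The paper's version is a touch cleaner because it sidesteps the congruence step $A_{12}(A_{22}-\kappa\1)^{-1}A_{21}\ge A_{12}A_{22}^{-1}A_{21}$.
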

\begin{proof} Under the assumption $0<A^{-1}<\kappa^{-1}\1$, hence also $0<\widehat{A}^{-1}<\kappa^{-1}\1$
due to \eqref{schur}. Taking the inverse gives $\widehat{A}>\kappa\1$.\end{proof}
We leave the proof of the following lemma to the reader. It is the analogue of Lemma \ref{lem:a}.
\begin{lemma}\label{lem:5}
The following relation holds
\begin{equation}\label{gauss61}
\int_{x^{(2)}\in\R^{n-p}}\e^{-\frac{1}{2}x\cdot Ax}\,dx^{(2)}=
\frac{(2\pi)^{n-p/2}}{\left(\det A_{22}\right)^{1/2}}\e^{-\frac{1}{2}x^{(1)}\cdot \widehat{A}x^{(1)}}.
\end{equation}
\end{lemma}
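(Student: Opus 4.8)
The plan is to mimic the proof of Lemma \ref{lem:a}: complete the square in the block variable $x^{(2)}$ so that the exponent factorizes into a part depending only on $x^{(1)}$ and a pure Gaussian in a shifted variable, and then to evaluate the remaining integral with the base formula \eqref{gauss02} together with translation invariance of the Lebesgue measure. The positivity data I need have already been recorded just above the statement: since $A>0$, the block $A_{22}$ is positive definite and hence invertible with $\det A_{22}>0$, so both $\widehat{A}$ and the forthcoming Gaussian integral over $\R^{n-p}$ are well defined. I will also use repeatedly that $A$ symmetric gives $A_{22}^T=A_{22}$ and $A_{21}^T=A_{12}$.

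The algebraic heart is the completion of the square. First I would introduce the shifted variable $y=x^{(2)}+A_{22}^{-1}A_{21}x^{(1)}$ and expand $y\cdot A_{22}y$. Using $A_{22}^T=A_{22}$ and $A_{21}^T=A_{12}$, this expansion reproduces exactly the three $x^{(2)}$-dependent terms of $x\cdot Ax$ in the block decomposition preceding the statement, plus the extra scalar $x^{(1)}\cdot A_{12}A_{22}^{-1}A_{21}x^{(1)}$. Rearranging yields the identity
\[
x\cdot Ax=x^{(1)}\cdot\widehat{A}\,x^{(1)}+y\cdot A_{22}\,y,
\qquad \widehat{A}=A_{11}-A_{12}A_{22}^{-1}A_{21},
\]
which is the precise bosonic analogue of the factorization of $\e^{-\overline{a}\cdot Aa}$ used in the proof of Lemma \ref{lem:a}.

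Then the exponential factorizes as $\e^{-\frac12 x\cdot Ax}=\e^{-\frac12 x^{(1)}\cdot\widehat{A}x^{(1)}}\,\e^{-\frac12 y\cdot A_{22}y}$. Since for fixed $x^{(1)}$ the map $x^{(2)}\mapsto y$ is a translation, translation invariance of the Lebesgue measure gives $dx^{(2)}=dy$, so the $x^{(1)}$-dependent factor pulls out of the $x^{(2)}$-integral. Applying \eqref{gauss02} with the positive definite $(n-p)\times(n-p)$ matrix $A_{22}$ produces the factor $(2\pi)^{(n-p)/2}/(\det A_{22})^{1/2}$ and leaves the $x^{(1)}$-exponential untouched, which is exactly \eqref{gauss61}.

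I do not expect a genuine obstacle here; the only step demanding care is the completion of the square, where the symmetry relations and the transpose appearing in $A_{22}^{-1}A_{21}$ must be tracked so that the two cross terms combine correctly and the Schur complement $\widehat{A}$ emerges with the right sign. Everything after that reduces to the base integral \eqref{gauss02} and translation invariance, exactly paralleling the fermionic argument of Lemma \ref{lem:a}.
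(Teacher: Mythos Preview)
Your proposal is correct and follows exactly the approach the paper intends: the paper explicitly leaves the proof to the reader, noting that Lemma~\ref{lem:5} is the bosonic analogue of Lemma~\ref{lem:a}, and your completion of the square in $x^{(2)}$ followed by translation invariance and \eqref{gauss02} is precisely that analogue.
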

Since $\widehat{A}$ is positive definite, we may integrate \eqref{gauss61} 
over $x^{(1)}$ and $\det A=\det A_{22}\cdot \det \widehat{A}$ follows, which is relation \eqref{gr113},
however, in the restricted context of positive $A$.

We turn to the generalized star product and a way to obtain it through Gaussian integrals.
Consider $X(v_1)$ and $X(v_2)$ of the form \eqref{gr18} and \eqref{gr19}. We use the notation 
employed in this context.       
\begin{proposition}\label{prop:app}
Assume $X(v_1)>\kappa\1,X(v_2)>\kappa\1$ with $\kappa>1$ and let $V(v_1,v_2)$ be an \emph{orthogonal} 
$p\times p$ matrix. Then $X(v_1)\star_{V(v_1,v_2)}X(v_2)>(\kappa-1)\1$.
\end{proposition}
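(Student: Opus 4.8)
The plan is to realize the generalized star product as the Schur complement of a single positive-definite matrix and then quote Lemma \ref{lem:4}. Concretely, I would work with the matrix $\cL_{\cG_2}$ assembled in \eqref{gr20} from $X(v_1)$, $X(v_2)$ and $V=V(v_1,v_2)$, and first check that under the stated hypotheses it is real symmetric. Since $X(v_1)$ and $X(v_2)$ are positive definite in the sense of the appendix, they are symmetric, so $C=B^T$ and $G=F^T$; and because $V$ is orthogonal, $V^{-1}=V^T$, which makes the off-diagonal blocks $-V^{-1}$ and $-V$ transposes of one another. Hence $\cL_{\cG_2}^T=\cL_{\cG_2}$.

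The core of the argument is the lower bound $\cL_{\cG_2}>(\kappa-1)\mathbb{I}$. Splitting a vector as $z=(z_1,z_2,z_3,z_4)$ according to the block sizes $(n_1,p,m_1,p)$ of \eqref{gr20}, a short computation using the symmetries above gives
\begin{equation*}
z\cdot \cL_{\cG_2}z=(z_1,z_2)\cdot X(v_1)(z_1,z_2)+(z_3,z_4)\cdot X(v_2)(z_3,z_4)-2\,z_4\cdot V z_2 .
\end{equation*}
The first two terms are bounded below by $\kappa(|z_1|^2+|z_2|^2)$ and $\kappa(|z_3|^2+|z_4|^2)$. For the indefinite cross term I would use that $V$ is orthogonal, so $|Vz_2|=|z_2|$, together with $2|z_4|\,|z_2|\le |z_4|^2+|z_2|^2$, to obtain $-2\,z_4\cdot Vz_2\ge-(|z_2|^2+|z_4|^2)$. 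Adding these, $z\cdot\cL_{\cG_2}z>(\kappa-1)|z|^2$ for $z\neq 0$, with strictness coming from the strict positivity of whichever of the two $X$-forms does not vanish. This step — taming the cross term produced by the connecting matrix — is the crux, and the only place where orthogonality of $V$ and the shift by $1$ in the conclusion enter; the hypothesis $\kappa>1$ is exactly what keeps the resulting constant positive.

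Finally I would identify the star product with this Schur complement. Because $\cL_{\cG_2}>(\kappa-1)\mathbb{I}>0$ is positive definite, so is every principal submatrix, in particular the internal block $\cT_{\cG_2}=\cL_\cI$; thus $\cL_\cI$ is invertible, $\cK_{\cG_2}$ is well defined, and by Lemma \ref{lem:1} it equals $X(v_1)\star_{V(v_1,v_2)}X(v_2)$. After a simultaneous permutation of rows and columns putting the external edges first and the internal edges second — under which both positivity and the Schur-complement formula \eqref{schur} are invariant — $\cK_{\cG_2}$ is precisely the Schur complement $\widehat{\cL_{\cG_2}}$ of $\cL_\cI$. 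Applying Lemma \ref{lem:4} with $\kappa$ replaced by $\kappa-1>0$ then yields $X(v_1)\star_{V(v_1,v_2)}X(v_2)=\widehat{\cL_{\cG_2}}>(\kappa-1)\mathbb{I}$, as claimed.
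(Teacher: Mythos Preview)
Your argument is correct and follows essentially the same route as the paper: show $\cL_{\cG_2}>(\kappa-1)\1$ by bounding the $X(v_k)$-blocks below by $\kappa$ and controlling the $V$-cross term via Cauchy--Schwarz and orthogonality, then identify the star product with the Schur complement $\cK_{\cG_2}$ and invoke Lemma~\ref{lem:4}. Your version is somewhat more explicit (checking symmetry, noting the index permutation), but the structure and the key estimate are the same.
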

\begin{proof}
Let $\cL_{\cG_2}$ be as in \eqref{gr20} but now indexed from $1$ to $n_1+p+m_1+p$. Also let the orthogonal 
matrix $V=V(v_1,v_2)=V(v_2,v_1)^{-1}=V(v_2,v_1)^{T}$ be 
indexed from $1$ to $p$.
For $0\neq x\in\R^{n_1+m_1+2p}$ by Schwarz inequality 
\begin{align*}
x\cdot \cL_{\cG_2} x&=\sum_{1\le i,j\le n_1+p}x_iX(v_1)_{ij}\,x_j
+\sum_{n_1+p+1\le i,j\le n_1+m_1+2p}x_iX(v_1)_{ij}\,x_j\\
&\qquad-\sum_{1\le i,j\le p}x_{n_1+i}V_{ij}\,x_{n_1+p+m_1+j}
-\sum_{1\le i,j\le p}x_{n_1+p+m_1+i}V_{ji}\,x_{n_1+j}\\
&>(\kappa -1)x\cdot x. 
\end{align*}
But $X(v_1)\star_{V(v_1,v_2)}X(v_2)$ is a Schur complement by the discussion in Subsection \ref{subsec:2},
that is $X(v_1)\star_{V(v_1,v_2)}X(v_2)=\cK_{\cG_2}$, and so the claim follows from Lemma \ref{lem:4}.
\end{proof}
Write $x=(z^{(1)},z^{(2)})\in\R^{n_1+m_1+2p}$ with 
$z^{(1)}=(x_1,\cdots,x_{n_1},x_{n_1+2p+1},\cdots,x_{n_1+2p+m_1})\in\R^{n_1+m_1}$
and $z^{(2)}=(x_{n_1+1},\cdots,x_{n_1+2p})\in\R^{2p}$. Then with the notation used in 
Subsection \ref{subsec:2} and in the proof of the lemma
we obtain
\begin{lemma}\label{lem:a3}
With the assumptions as in Lemma \ref{lem:5} the relation
\begin{equation}\label{gauss7}
\int_{z^{(2)}\in\R^{2p}}\e^{-\frac{1}{2}x\cdot \cL_{\cG_2} x}dz^{(2)}=\frac{(2\pi)^{p}}{(\det \cT_{\cG_2})^{1/2}}
\e^{-\frac{1}{2}z^{(1)}\cdot\cK_{\cG_2} z^{(1)}}
\end{equation}
holds.
\end{lemma}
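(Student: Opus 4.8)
The plan is to reduce \eqref{gauss7} to the block integration formula of Lemma \ref{lem:5} by a relabeling of the integration variables. First I would record that under the standing hypotheses (those of Proposition \ref{prop:app}) the matrix $\cL_{\cG_2}$ of \eqref{gr20} is real symmetric and positive definite. Symmetry follows because $X(v_1)$ and $X(v_2)$ are symmetric, so that $C=B^T$ and $G=F^T$ in \eqref{gr18}, \eqref{gr19}, and because $V$ is orthogonal, whence $(-V^{-1})^T=-V$ makes the two internal coupling blocks of \eqref{gr20} transpose to one another. Positive definiteness is precisely the chain of inequalities established in the proof of Proposition \ref{prop:app}, which gives $\cL_{\cG_2}>(\kappa-1)\1>0$ since $\kappa>1$.

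The only mismatch with Lemma \ref{lem:5} is that the $2p$ internal variables collected in $z^{(2)}$ do not form the final contiguous block of indices in the ordering underlying \eqref{gr20}: the $m_1$ external variables of $v_2$ sit between the two internal blocks. I would therefore introduce the permutation $P$ of the $n_1+m_1+2p$ coordinates that moves the external variables $z^{(1)}$ into the first $n_1+m_1$ slots and the internal variables $z^{(2)}$ into the last $2p$ slots, and pass to $\widetilde{\cL}=P\,\cL_{\cG_2}\,P^T$. Since a permutation has determinant $\pm1$, it preserves both the Lebesgue measure $dx$ and the partial volume element $dz^{(2)}$, and it leaves the quadratic form $x\cdot\cL_{\cG_2}x=(Px)\cdot\widetilde{\cL}(Px)$ unchanged, so the left-hand side of \eqref{gauss7} equals the corresponding integral for $\widetilde{\cL}$.

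In the permuted ordering the four blocks of $\widetilde{\cL}$ are exactly those displayed in \eqref{gr21}: the external-external block is $\cL_\cE$, the two off-diagonal blocks are $\cL_{\cE,\cI}$ and $\cL_{\cI,\cE}$, and the internal-internal block is $\cT_{\cG_2}=\cL_\cI$. As a principal submatrix of the positive definite $\widetilde{\cL}$, the block $\cT_{\cG_2}$ is itself positive definite, so $\det\cT_{\cG_2}>0$ and Lemma \ref{lem:5} applies with $A\mapsto\widetilde{\cL}$, $x^{(1)}\mapsto z^{(1)}$, $x^{(2)}\mapsto z^{(2)}$, $A_{22}\mapsto\cT_{\cG_2}$. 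Its Schur complement $\widehat{A}=\cL_\cE-\cL_{\cE,\cI}\cT_{\cG_2}^{-1}\cL_{\cI,\cE}$ is by \eqref{gr16} precisely $\cK_{\cG_2}$. Since there are $2p$ integrated variables, the prefactor $(2\pi)^{(n-p)/2}/(\det A_{22})^{1/2}$ of Lemma \ref{lem:5} specializes to $(2\pi)^{p}/(\det\cT_{\cG_2})^{1/2}$, which is exactly \eqref{gauss7}.

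I expect no genuine obstacle here: the whole content is bookkeeping of the index ordering, so that the grouped Schur-complement basis of \eqref{gr21} matches the contiguous split required by Lemma \ref{lem:5}, together with the observation that $2p$ (not $p$) variables are integrated out. The one point worth stating explicitly is that \emph{orthogonality} of $V$, rather than the mere invertibility used in the fermionic treatment, is what renders $\cL_{\cG_2}$ symmetric and hence makes the bosonic Gaussian machinery applicable.
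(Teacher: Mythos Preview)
Your proposal is correct and follows exactly the route the paper intends: the paper states Lemma~\ref{lem:a3} without proof, immediately after introducing the reordering $x=(z^{(1)},z^{(2)})$, so the implicit argument is precisely your application of Lemma~\ref{lem:5} to the permuted block form \eqref{gr21} together with the identification of the Schur complement as $\cK_{\cG_2}$ via \eqref{gr16}. Your explicit verification that orthogonality of $V$ (rather than mere invertibility) is what makes $\cL_{\cG_2}$ symmetric, and your care with the $2p$ versus $p$ in the prefactor, are the right points to flag.
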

We turn to an arbitrary graph $\cG$ with data $\underline{X}_\cG$ and $\underline{V}_\cG$ 
with the property that each $X(v)$ is positive definite and each $V(v,v^\prime)$ is orthogonal.
Introduce the variable 
$$
x=\{x_{i,v}\}_{i,v\in (\cE\cup\cI)\triangleright\cV}=\{ x_{i,v}\}_{i,v:i\in\cI(v),v\in\cV}
\in\R^{\mid\cE\mid+2\mid\cI\mid},
$$
let the matrices $\cL(v)$ and $\cL(v,v^\prime)$ be as in \eqref{gr12} and set 
\begin{equation}\label{gauss8}
\begin{aligned}
x\cdot \cL(v)x&=\sum_{i,j\in\cI(v)}x_{i,v}X(v)_{ij}x_{j,v}\\
x\cdot \cL(v,v^\prime)x&
=-\sum_{i,j\in\cI(v)\cap\cI(v^\prime)}x_{i,v}V(v,v^\prime)_{ij}x_{j,v^\prime}.
\end{aligned}
\end{equation}
Define the \emph{quadratic, bosonic Lagrangian} as
\begin{equation}\label{gauss9}
x\cdot \cL_\cG\, x=\sum_{v}x\cdot \cL(v)\,x
-\sum_{v\neq v^\prime}x\cdot \cL(v,v^\prime)\,x.
\end{equation}
We decompose $x$ into its \emph{exterior} and \emph{interior components}, that is $x=(x_\cE,x_\cI)$ with
$$
x_{\cE}=\{x_{e,v=\partial(e)}\}_{e\in\cE},\qquad x_{\cI}=\{x_{i,v}\}_{i\in\cI(v),v\in \cV}.
$$
and correspondingly we get 
\begin{align}\label{gauss10}
x\cdot \cL_\cG\, x&=x_\cE\cdot \cL_\cE\,x_\cE+
x_\cE\cdot \cL_{\cE,\cI}\, x_\cI+x_\cI\cdot \cL_{\cI,\cE}\, x_{\cE}+
x_\cI\cdot \cL_\cI\, x_\cI\\\nonumber
&=(x_\cI+\left(\cL_\cI\right)^{-1\,T}\left(\cL_{\cE,\cI}\right)^T\,x_\cE)\cdot \cL_\cI\,
(x_\cI+\left(\cL_\cI\right)^{-1}\cL_{\cI,\cE}\, x_\cE)\\\nonumber
&\qquad \qquad+x_\cE\cdot
\left(\cL_\cE-\cL_{\cE,\cI}\left(\cL_\cI\right)^{-1}
\cL_{\cI,\cE}\right)x_\cE.\nonumber
\end{align}
The following extension of Lemma \ref{lem:a3} to arbitrary graphs is valid
\begin{proposition}\label{prop:a4}
Given data $\underline{X}_\cG=\{X(v)\}_{v\in\cG}$ and 
$\underline{V}_\cG=\{V(v,v^\prime)\}_{v\neq v^\prime\in \cV}$ with 
$X(v)>(n(\cV)-1)\1$ and orthogonal $V(v,v^\prime)$, the $\mid\cE\mid\times \mid \cE\mid$ 
matrix $\cK_\cG$ defined by 
\begin{equation}\label{gauss11}
\int_{x_\cI\in\R^{2\mid\cI\mid}}\e^{-\frac{1}{2}x\cdot \cL_{\cG} x}dx_\cI=
\frac{(2\pi)^{\mid\cI\mid}}{(\det \cT_{\cG})^{1/2}}
\e^{-\frac{1}{2}x_\cE\cdot\cK_{\cG} x_\cE}
\end{equation}
is positive definite. 
\end{proposition}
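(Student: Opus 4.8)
The plan is to recognize $\cK_\cG$ as the Schur complement of $\cL_\cI$ inside $\cL_\cG$ and thereby reduce the assertion to the positive definiteness of the full Lagrangian matrix $\cL_\cG$. Completing the square as in \eqref{gauss10} shows that the Gaussian integral \eqref{gauss11} defining $\cK_\cG$ converges precisely when $\cL_\cI$ is positive definite, and that then $\cK_\cG=\cL_\cE-\cL_{\cE,\cI}\left(\cL_\cI\right)^{-1}\cL_{\cI,\cE}$ is exactly that Schur complement. The matrix $\cL_\cG$ is real symmetric: its diagonal blocks are the symmetric $X(v)$, while the blocks coupling $v$ and $v'$ are built from $V(v,v')$ and $V(v',v)=V(v,v')^{-1}=V(v,v')^{T}$ by \eqref{symmetry} and orthogonality, hence are transposes of one another. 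Consequently, once I show $\cL_\cG>0$, Lemma \ref{lem:4} applied with $A_{22}=\cL_\cI$ transfers any lower bound $\cL_\cG>\kappa\1$ to $\cK_\cG>\kappa\1$; in particular, taking $\kappa$ to be the least eigenvalue of $\cL_\cG$ gives $\cK_\cG>0$ and, along the way, $\cL_\cI>0$, legitimizing the very definition of $\cK_\cG$.

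First I would estimate $x\cdot\cL_\cG x$ directly, generalizing the proof of Proposition \ref{prop:app}. By \eqref{gauss9} and \eqref{gauss8}, the diagonal part $\sum_v x_{\cI(v)}\cdot X(v)\,x_{\cI(v)}$ exceeds $(n(\cV)-1)\sum_v\|x_{\cI(v)}\|^2=(n(\cV)-1)\|x\|^2$, since each variable $x_{i,v}$ belongs to exactly one vertex. For the coupling, the ordered contributions from $(v,v')$ and $(v',v)$ combine, via $V(v',v)=V(v,v')^{T}$, into a single term $2\,a\cdot V(v,v')\,b$, with $a,b$ the interior variables at $v$ resp.\ $v'$ on the shared edges; orthogonality of $V(v,v')$ and Cauchy--Schwarz then yield $|2\,a\cdot V(v,v')\,b|\le\|a\|^2+\|b\|^2$. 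The key combinatorial point is that each internal variable $x_{i,v}$ occurs in precisely one such coupling, namely the one associated with the unique second endpoint of the edge $i$; hence the total coupling loss is at most $\sum\bigl(\|a\|^2+\|b\|^2\bigr)=\|x_\cI\|^2\le\|x\|^2$. Combining, for $x\neq 0$ one gets $x\cdot\cL_\cG x>(n(\cV)-1)\|x\|^2-\|x\|^2=(n(\cV)-2)\|x\|^2\ge 0$, the first inequality being strict. Thus $\cL_\cG>0$, and the conclusion follows from the reduction above. In this route the only delicate point is the bookkeeping just described, ensuring that the cross terms are counted exactly once.

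An alternative, closer in spirit to the recursive construction, is to iterate Proposition \ref{prop:app} along the recursion \eqref{gr243}--\eqref{gr25}, showing inductively that $\cK_l>(n(\cV)-l)\1$, so that $\cK_\cG=\cK_{l=n(\cV)}>0$. The hard part here is the final step, where the accumulated bound on $\cK_{n(\cV)-1}$ equals exactly $1$, sitting right at the boundary $\kappa>1$ of the hypothesis of Proposition \ref{prop:app}. I would close this gap by a compactness argument: in finite dimensions the strict inequality $\cK_{n(\cV)-1}>\1$ forces $\cK_{n(\cV)-1}>(1+\epsilon)\1$ for some $\epsilon>0$, so Proposition \ref{prop:app} applies with $\kappa=1+\epsilon$ and yields $\cK_\cG>\epsilon\1>0$. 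This boundary subtlety is exactly why the hypothesis is stated with the constant $n(\cV)-1$, even though the direct estimate of the previous paragraph shows that constant to be more than sufficient.
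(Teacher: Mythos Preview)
Your proposal is correct. Your second route --- iterating Proposition~\ref{prop:app} along the recursion \eqref{gr25} --- is exactly the argument the paper gives, though the paper's two-line proof does not spell out the boundary issue at the last step that you identify and resolve via the finite-dimensional compactness observation.

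Your first route, the direct estimate of $x\cdot\cL_\cG x$ on the whole graph, is a genuinely different and somewhat cleaner alternative. It globalizes the Schwarz-inequality argument from the proof of Proposition~\ref{prop:app} in one shot rather than inductively: the combinatorial point --- each interior variable $x_{i,v}$ sits in exactly one unordered coupling pair, so the total off-diagonal loss is bounded by $\|x_\cI\|^2$ --- is exactly right and is what makes the estimate close. This yields $\cL_\cG>0$ directly, after which a single application of Lemma~\ref{lem:4} (with $\kappa$ the least eigenvalue of $\cL_\cG$) gives $\cK_\cG>0$, with no boundary case to navigate and with $\cL_\cI>0$ falling out as a principal-submatrix bound. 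It also makes visible that the hypothesis $X(v)>(n(\cV)-1)\1$ is generally far from sharp: the off-diagonal loss depends only on the internal variables, not on the total vertex count. The paper's iterative route, by contrast, stays within the star-product formalism and reuses Proposition~\ref{prop:app} as a black box, at the price of the $\epsilon$-argument at the final step.
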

\begin{proof}
We use the representation \eqref{gr25} and Lemma \ref{lem:4} repeatedly. Observe that 
$\overline{V}(v_{l+1})$ defined by \eqref{gr242} is also orthogonal.
\end{proof}
\end{appendix}

%%%%%%%%%%%%%%%%%%%%%%%%%%%%%%%%%%%%%%%%%%%%%%%%%%%%%%%%%%%%%%%%%%%%%%%%%%%%%%%%%%%%%%%%%%%%%%%%%%%%%%%

\end{document}